\def \pref#1{{\mathbf{pref}(#1)}}
\def \strr#1{\sqrt[\,{\textstyle *}\,]{#1}}
\def \bbbn{\mathrm{I\!N}}
\def \infix#1{{\mathbf{infix}(#1)}}
\newcommand*{\qed}{\raisebox{0.5ex}[0ex][0ex]{\framebox[1ex][l]{}}}
\newtheorem{theorem}{Theorem}
\newtheorem{lemma}[theorem]{Lemma}
\newtheorem{proposition}[theorem]{Proposition}
\newtheorem{corollary}[theorem]{Corollary}
\newtheorem{definition}[theorem]{Definition}
\newtheorem{example}[theorem]{Example}
\newenvironment{proof}{\textsl{Proof: }}{\hfill\qed\par}
\title{The Maximal Subword Complexity of Quasiperiodic Infinite Words
\author{Ronny Polley
\institute{Martin-Luther-Universit\"at
  Halle-Wittenberg}
\institute{Institut f\"ur Informatik\\
D-06099 Halle (Saale), Germany}
\email{ronny.polley@informatik.uni-halle.de}
\and
Ludwig Staiger
\institute{Martin-Luther-Universit\"at
  Halle-Wittenberg}
\institute{Institut f\"ur Informatik\\
D-06099 Halle (Saale), Germany}
\email{ludwig.staiger@informatik.uni-halle.de}
}
}
\begin{document}
\maketitle

\begin{abstract}
  We provide an exact estimate on the maximal subword complexity for
  quasiperiodic infinite words. To this end we give a representation of the
  set of finite and of infinite words having a certain quasiperiod $q$ via a
  finite language derived from $q$. It is shown that this language is a suffix
  code having a bounded delay of decipherability. 

  Our estimate of the subword complexity now follows from this result,
  previously known results on the subword complexity and elementary results on
  formal power series.

  Keywords: quasiperiodic words, codes, subword complexity, structure
  generating function
\end{abstract}
In his tutorial \cite{DBLP:journals/eatcs/Marcus04} Solomon Marcus provided
some initial facts on quasiperiodic infinite words. Here he posed several
questions on the complexity of quasiperiodic infinite words. Some answers
mainly for questions concerning quasiperiodic infinite words of low complexity
were given in \cite{DBLP:journals/eatcs/LeveR04,DBLP:journals/tcs/LeveR07}.

The investigations of the present paper turn to the question which are the
maximally possible complexity functions for those words. As complexity we
follow Marcus'~\cite{DBLP:journals/eatcs/Marcus04} Question~2 to consider the
(subword) complexity function $f(\xi,n)$ of an infinite word $\xi$; $f(\xi,n)$
being its number of subwords of length $n$. This subword complexity of
infinite words ($\omega$-words) was mainly investigated for those words of low
(polynomial) complexity (see the tutorial 
\cite{DBLP:journals/eatcs/BerstelK03} or the book \cite{book:AlloucheSha}). In
\cite{DBLP:journals/iandc/Staiger93,DBLP:conf/csl/Staiger97} some
results on exponential subword complexity helpful for the present
considerations are derived.

As a final result we obtain that the maximally possible complexity functions
for quasiperiodic infinite words $\xi$ are bounded from above by a function of
the form $f(\xi,n)\leq c_{\xi}\cdot t_{P}^{n}$ where $t_{P}$ is the smallest
Pisot-Vijayaraghavan number, that is, the unique real root $t_P$ of the cubic
polynomial $x^{3}-x-1$, which is approximately equal to $t_P\approx
1.324718$. We show also that this bound is tight, that is, there are
$\omega$-words $\xi$ having $f(\xi,n)\approx c\cdot t_{P}^{n}$.

The paper is organised as follows. After introducing some notation we derive
in Section~\ref{sec.quas} a characterisation of quasiperiodic words and
$\omega$-words having a certain quasiperiod $q$. Moreover, we introduce a
finite basis set $P_{q}$ from which the sets of quasiperiodic words or
$\omega$-words having quasiperiod $q$ can be constructed. In
Section~\ref{sec.codes} it is then proved that the star root of $P_{q}$ is a
suffix code having a bounded delay of decipherability.

This much prerequisites allow us, in Section~\ref{sec.subw} to estimate the
number of subwords of the language $Q_{q}$ of all quasiperiodic words having
quasiperiod $q$. It turns out that $c_{q,1}\cdot \lambda_{q}^{n}\le
f(Q_{q},n)\leq c_{q,2}\cdot \lambda_{q}^{n}$ where $f(Q_{q},n)$ is the number
of subwords of length $n$ of words in $Q_{q}$ and $1\leq \lambda_{q}\leq
t_{P}$ depends on $q$. {}From these results we derive our estimates for the
subword complexity of quasiperiodic infinite words. Finally, we show that, for
every quasiperiod $q$, there is a quasiperiodic \mbox{$\omega$-word} $\xi$
with quasiperiod $q$ whose subword complexity $f(\xi,n)$ meets the upper bound
$c_{q,2}\cdot \lambda_{q}^{n}$.

\section{Notation}
\label{sec.notat}
In this section we introduce the notation used throughout the paper.  By
$\bbbn = \{ 0,1,2,\ldots\}$ we denote the set of natural numbers. Let $X$ be
an alphabet of cardinality $|X| = r\ge 2$. By $X^*$ we denote the set of
finite words on $X$, including the \textit{empty word} $e$, and $X^{\omega}$
is the set of infinite strings ($\omega$-words) over $X$.  Subsets of $X^*$
will be referred to as \textit{languages} and subsets of $X^\omega$ as
\textit{$\omega$-languages}.

For $w\in X^*$ and $\eta\in X^*\cup X^{\omega}$ let $w \cdot{}\eta$ be their
\textit{concatenation}.  This concatenation product extends in an obvious way
to subsets $L \subseteq X^*$ and $B\subseteq X^*\cup X^{\omega}$. For a
language $L$ let $L^* := \bigcup_{i \in \bbbn} L^i$, and by
$L^\omega:=\{w_1\cdots w_i\cdots: w_i\in L\setminus \{e\}\}$ we denote the set
of infinite strings formed by concatenating words in $L$.  Furthermore $|w|$
is the \textit{length} of the word $w\in X^*$ and $\pref B$ is the set of all
finite prefixes of strings in $B\subseteq X^*\cup X^\omega$.  We shall
abbreviate $w\in \pref \eta\ (\eta\in X^*\cup X^\omega)$ by $w\sqsubseteq
\eta$%
.

We denote by $B/w:= \{\eta: w\cdot\eta\in B\}$ the \emph{left derivative} of
the set $B\subseteq X^*\cup X^{\omega}$.  As usual, a language $L\subseteq
X^{*}$ is \textit{regular} provided it is accepted by a finite automaton. An
equivalent condition is that its set of left derivatives $\{L/w: w\in X^{*}\}$
is finite.

The sets of infixes of $B$ or $\eta$ are $\infix B:= \bigcup_{w\in
  X^{*}}\pref{B/w}$ and $\infix \eta:= \bigcup_{w\in X^{*}}\pref{\{\eta\}/w}$,
respectively.  In the sequel we assume the reader to be familiar with basic
facts of language theory.

As usual a language $L\subseteq X^*$ is called a \emph{code} provided
$w_1\cdots w_l= v_1\cdots v_k$ for $w_1,\dots, w_l,$ $v_1,\dots ,v_k\in L$
implies $l=k$ and $w_i=v_i$. 

\section{Quasiperidicity}
\label{sec.quas}
\subsection{General properties}

A finite or infinite word $\eta\in X^{*}\cup X^{\omega}$ is referred to as
\emph{quasiperiodic} with quasiperiod $q\in X^{*}\setminus \{e\}$ provided for
every $j<|\eta|\in \bbbn\cup\{\infty\}$ there is a prefix $u_{j}\sqsubseteq
\eta$ of length $j-|q|<|u_{j}|\leq j$ such that $u_{j}\cdot q\sqsubseteq
\eta$, that is, for every $w\sqsubseteq\eta$ the relation $u_{|w|}\sqsubset
w\sqsubseteq u_{|w|}\cdot q$ is valid.

Let for $q\in X^{*}\setminus \{e\}$, $Q_{q}$ be the set of {quasiperiodic}
words with quasiperiod $q$. Then $\{q\}^{*}\subseteq Q_{q} =Q_{q}^{*}$ and
$Q_{q}\setminus \{e\}\subseteq X^{*}\cdot q\cap q\cdot X^{*}$.

\begin{definition}\upshape{}
  A family $\bigl(w_{i}\bigr)_{i=1}^{\ell}$, $\ell\in \bbbn\cup\{\infty\}$, of
  words $w_{i}\in X^{*}\cdot q$ is referred to as a
  \emph{$q$-chain} 
  provided $w_{1}=q$, $w_{i}\sqsubset w_{i+1}$ and $|w_{i+1}|-|w_{i}|\leq
  |q|$.
\end{definition}
It holds the following.
\begin{lemma}\label{l.Qq}\mbox{ }\\[-18pt]
  \begin{enumerate}
  \item $w\in Q_{q}\setminus\{e\}$ if and only if there is a $q$-chain
    $\bigl(w_{i}\bigr)_{i=1}^{\ell}$ such that $w_{\ell}= w$.
  \item An $\omega$-word $\xi\in X^{\omega}$ is quasiperiodic with quasiperiod
    $q$ if and only if there is a $q$-chain $\bigl(w_{i}\bigr)_{i=1}^{\infty}$
    such that $w_{i}\sqsubset \xi$.
  \end{enumerate}
\end{lemma}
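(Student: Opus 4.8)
The plan is to prove both parts by induction together with the basic observations on $Q_q$ recorded just before the statement, namely that $Q_q\setminus\{e\}\subseteq X^{*}\cdot q\cap q\cdot X^{*}$ and that $\{q\}^{*}\subseteq Q_q=Q_q^{*}$. For part~1, I would first unfold the definition of quasiperiodicity for a finite word $w$ of length $n=|w|$: for each $j<n$ there is a prefix $u_j\sqsubseteq w$ with $j-|q|<|u_j|\le j$ and $u_j\cdot q\sqsubseteq w$. Taking $j$ ranging over suitable values, the prefixes $u_j\cdot q$ that are themselves prefixes of $w$ are exactly the candidates for the members of a $q$-chain. Concretely, for the forward direction I would extract from the family $(u_j)_{j<n}$ an increasing sequence of prefixes $q=w_1\sqsubset w_2\sqsubset\cdots\sqsubset w_\ell=w$ by starting from $w_1=q$ (which is a prefix of $w$ since $w\in q\cdot X^{*}$) and, given $w_i\sqsubset w$ of length $m_i$, choosing the next block $w_{i+1}:=u_{m_i}\cdot q$ where $u_{m_i}$ is the prefix guaranteed by quasiperiodicity for $j=m_i$; then $|w_{i+1}|\le m_i+|q|=|w_i|+|q|$ and $w_i\sqsubset u_{m_i}\cdot q=w_{i+1}$ because $|u_{m_i}|>m_i-|q|$ forces $u_{m_i}\cdot q$ to properly extend $w_i$, while $w_{i+1}\sqsubseteq w$ by construction. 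Iterating, the lengths strictly increase so the process terminates with some $w_\ell=w$ (one must check it cannot overshoot $w$; since each $w_{i+1}$ is a prefix of $w$ this is automatic, and it reaches $|w|$ exactly because at $j=|w|-1$ we still get $u_j\cdot q\sqsubseteq w$ forcing $w$ itself into the chain). For the converse I would check directly that if $(w_i)_{i=1}^{\ell}$ is a $q$-chain with $w_\ell=w$, then every $j<|w|$ lies in the "window" $|w_i|-|q|\le j<|w_i|$ for some $i$ (because consecutive lengths differ by at most $|q|$ and $|w_1|=|q|$), and then $u_j$ can be taken to be the prefix of $w_i$ of length $|w_i|-|q|$, which satisfies $j-|q|<|u_j|\le j$ and $u_j\cdot q=w_i\sqsubseteq w$.

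For part~2, the argument is the $\omega$-analogue of part~1 and I would reuse the same block-extraction. Given a quasiperiodic $\xi$, apply the construction of part~1's forward direction but never stop: starting from $w_1=q\sqsubset\xi$ and, given $w_i\sqsubset\xi$ of length $m_i$, set $w_{i+1}:=u_{m_i}\cdot q$ with $u_{m_i}$ the prefix of $\xi$ supplied by quasiperiodicity at $j=m_i$; this yields an infinite $q$-chain with $w_i\sqsubset\xi$ for all $i$, the lengths tending to infinity. Conversely, from an infinite $q$-chain $(w_i)_{i=1}^{\infty}$ with $w_i\sqsubset\xi$, the lengths $|w_i|$ are strictly increasing, hence unbounded, so every $j<\infty$ falls in some window $|w_i|-|q|\le j<|w_i|$ and the prefix $u_j$ of $w_i$ of length $|w_i|-|q|$ witnesses quasiperiodicity of $\xi$ at $j$; here one uses that $w_i\sqsubset\xi$ gives $u_j\cdot q=w_i\sqsubseteq\xi$.

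The step I expect to require the most care is the forward direction of part~1: verifying that the greedy selection $w_{i+1}:=u_{m_i}\cdot q$ genuinely produces an increasing chain whose length increments are bounded by $|q|$ \emph{and} that it terminates exactly at $w$ rather than stalling or overshooting. The inequality $m_i-|q|<|u_{m_i}|\le m_i$ is what simultaneously guarantees $w_i\sqsubset w_{i+1}$ (strict extension, since $|u_{m_i}|+|q|>m_i=|w_i|$) and $|w_{i+1}|-|w_i|\le|q|$; and termination at $w$ follows because every $w_{i+1}$ is by construction a prefix of $w$ and the final application at $j=|w|-1$ forces $w$ itself to appear. Once this bookkeeping is in place, everything else is routine unwinding of the definitions, and part~2 is obtained by letting the same process run forever.
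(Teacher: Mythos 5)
Your proposal is correct and follows essentially the same route as the paper: the forward direction uses the identical greedy construction $w_1=q$, $w_{i+1}=u_{|w_i|}\cdot q$, and the converse recovers the witnesses $u_j$ from the chain (the paper phrases this via a maximal choice and a contradiction, you via the overlapping windows $[|w_i|-|q|,|w_i|)$, which is the same covering argument). No gaps.
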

\begin{proof}
  It suffices to show how a family $\bigl(u_{j}\bigr)_{j=0}^{|\eta|-1}$ can be
  converted to a $q$-chain $\bigl(w_{i}\bigr)_{i=1}^{\ell}$ and vice versa.

  Consider $\eta\in X^{*}\cup X^{\omega}$ and let
  $\bigl(u_{j}\bigr)_{j=0}^{|\eta|-1}$ be a family such that $u_{j}\cdot
  q\sqsubseteq \eta$ and $j-|q|<|u_{j}|\leq j$ for $j<|\eta|$.

  Define $w_{1}:= q$ and $w_{i+1}:= u_{|w_{i}|}\cdot q$ as long as
  $|w_{i}|<|\eta|$. Then $w_{i}\sqsubseteq\eta$ and $|w_{i}|< |w_{i+1}|=
  |u_{|w_{i}|}\cdot q|\leq |w_{i}| + |q|$. Thus
  $\bigl(w_{i}\bigr)_{i=1}^{\ell}$ is a $q$-chain with $w_{i}\sqsubseteq
  \eta$.

  Conversely, let $\bigl(w_{i}\bigr)_{i=1}^{\ell}$ be a $q$-chain such that
  $w_{i}\sqsubseteq \eta$ and set
  \[u_{j}:= \max\nolimits_{\sqsubseteq}\bigl\{w': \exists i(w'\cdot q=
  w_{i}\wedge |w'|\leq j)\bigr\}\mbox{ , for }j<|\eta|\,.\] By definition,
  $u_{j}\cdot q\sqsubseteq \eta$ and $|u_{j}|\leq j$. Assume $|u_{j}|\leq
  j-|q|$ and $u_{j}\cdot q=w_{i}$. Then $|w_{i}|\leq j<|\eta|$. Consequently,
  in the $q$-chain there is a successor $w_{i+1}$, $|w_{i+1}|\leq |w_{i}|+
  |q|\leq j+ |q|$. Let $w_{i+1}=w''\cdot q$. Then $u_{j}\sqsubset w''$ and
  $|w''|\leq j$ which contradicts the maximality of $u_{j}$.
\end{proof}
\begin{corollary}\label{c.Qq}
  Let $u\in \pref{Q_q}$. Then there are words $w,w'\in Q_{q}$ such that
  $w\sqsubseteq u\sqsubseteq w'$ and $|u|-|w|, |w'|-|u|\leq |q|$.
\end{corollary}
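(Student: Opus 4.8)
The plan is to reduce everything to the equivalence furnished by Lemma~\ref{l.Qq}(1), namely that nonempty elements of $Q_q$ are exactly the final words of $q$-chains. Since $u\in\pref{Q_q}$, there is some $v\in Q_q$ with $u\sqsubseteq v$; we may assume $v\neq e$, since otherwise $u=e$ and the claim is trivial with $w=w'=e$ (or, if one insists on nonempty witnesses, the case $u=e$ forces $|u|=0\le|q|$ and $w'=q$ works). By Lemma~\ref{l.Qq}(1) fix a $q$-chain $(w_i)_{i=1}^{\ell}$ with $w_{\ell}=v$, so that $q=w_1\sqsubset w_2\sqsubset\cdots\sqsubset w_{\ell}=v$, consecutive lengths differ by at most $|q|$, and every $w_i\sqsubseteq v$.

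First I would handle the \emph{lower} witness $w$. If $|u|<|q|$ then every $w_i$ has length $\ge|q|>|u|$, so no proper prefix witness among the chain sits below $u$; but here I instead take $w:=e\in Q_q$, giving $w\sqsubseteq u$ and $|u|-|w|=|u|<|q|$. If $|u|\ge|q|$, consider the largest index $i$ with $|w_i|\le|u|$; such $i$ exists because $|w_1|=|q|\le|u|$. Set $w:=w_i$. Since all $w_j\sqsubseteq v$ and $u\sqsubseteq v$, two prefixes of the same word are comparable, and $|w_i|\le|u|$ forces $w_i\sqsubseteq u$. By maximality of $i$ either $i=\ell$ (so $w_i=v$ and actually $|u|\le|v|=|w_i|$, hence $w=u$ and $|u|-|w|=0$) or $|w_{i+1}|>|u|$, whence $|u|-|w_i|<|w_{i+1}|-|w_i|\le|q|$. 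Either way $w\in Q_q$, $w\sqsubseteq u$, and $|u|-|w|\le|q|$.

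Next the \emph{upper} witness $w'$. If $i=\ell$ in the previous step — equivalently $|u|\ge|w_{\ell}|=|v|$, which combined with $u\sqsubseteq v$ gives $u=v$ — take $w':=v=u$, so $|w'|-|u|=0$. Otherwise $w':=w_{i+1}\in Q_q$ satisfies $w_{i+1}\sqsubseteq v$ and, since $|w_{i+1}|>|u|$ while $u,w_{i+1}$ are both prefixes of $v$ hence comparable, $u\sqsubseteq w_{i+1}=w'$; and $|w'|-|u|=|w_{i+1}|-|u|\le|w_{i+1}|-|w_i|\le|q|$. This gives $w\sqsubseteq u\sqsubseteq w'$ with both gaps bounded by $|q|$, as required.

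The only subtlety — and the step I would be most careful about — is the boundary bookkeeping: making sure the index $i$ is well defined (which needs $|u|\ge|q|$, handled separately by the $w=e$ case) and correctly treating the extreme case $i=\ell$, where the chain offers no successor $w_{i+1}$ and one must notice that $u$ has simply caught up to $v$ itself. Everything else is just the observation that prefixes of a common word $v$ are linearly ordered by $\sqsubseteq$, which converts the length inequalities into the desired prefix relations. No formal power series or coding arguments are needed here; this corollary is a purely combinatorial consequence of Lemma~\ref{l.Qq}.
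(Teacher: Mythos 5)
Your argument is correct and is exactly the reasoning the paper leaves implicit: the corollary is stated with no separate proof as an immediate consequence of Lemma~\ref{l.Qq}, namely picking a $q$-chain $(w_i)_{i=1}^{\ell}$ ending at some $v\in Q_q$ with $u\sqsubseteq v$ and bracketing $u$ between consecutive chain elements, whose lengths start at $|q|$ and increase in steps of at most $|q|$. The only nit is that in the subcase $0<|u|<|q|$ your index $i$ is undefined, so the ``otherwise $w':=w_{i+1}$'' branch for the upper witness does not literally apply; there one simply takes $w':=w_1=q$ (equivalently, adopt the convention $w_0:=e$, $i=0$), which gives $u\sqsubseteq q$ and $|q|-|u|\leq|q|$.
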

\begin{corollary}\label{c.Qdelta}
  Let $\xi\in X^{\omega}$. Then the following are equivalent.
  \begin{enumerate}
  \item $\xi$ is quasiperiodic with quasiperiod $q$.
  \item $\pref\xi\cap Q_{q}$ is infinite.
  \item $\pref\xi\subseteq\pref{Q_q}$.
  \end{enumerate}
\end{corollary}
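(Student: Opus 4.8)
The plan is to prove Corollary~\ref{c.Qdelta} by establishing the cycle of implications $(1)\Rightarrow(2)\Rightarrow(3)\Rightarrow(1)$, drawing on Lemma~\ref{l.Qq} and Corollary~\ref{c.Qq}. For $(1)\Rightarrow(2)$: if $\xi$ is quasiperiodic with quasiperiod $q$, then by Lemma~\ref{l.Qq}(2) there is an infinite $q$-chain $\bigl(w_{i}\bigr)_{i=1}^{\infty}$ with $w_{i}\sqsubset\xi$. Each finite initial segment $\bigl(w_{i}\bigr)_{i=1}^{k}$ is again a $q$-chain with last element $w_{k}$, so by Lemma~\ref{l.Qq}(1) we get $w_{k}\in Q_{q}\setminus\{e\}$ for every $k$. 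Since the $w_{i}$ are strictly increasing in length and all are prefixes of $\xi$, the set $\pref\xi\cap Q_{q}$ contains infinitely many distinct words.

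For $(2)\Rightarrow(3)$: suppose $\pref\xi\cap Q_{q}$ is infinite, and let $u\sqsubset\xi$ be arbitrary. Because $\pref\xi\cap Q_{q}$ is infinite, it contains some word $w'$ with $|w'|\geq|u|$; since both $u$ and $w'$ are prefixes of $\xi$, we have $u\sqsubseteq w'$, and hence $u\in\pref{w'}\subseteq\pref{Q_q}$. Thus $\pref\xi\subseteq\pref{Q_q}$.

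For $(3)\Rightarrow(1)$: assume $\pref\xi\subseteq\pref{Q_q}$ and verify the defining property of quasiperiodicity, i.e. that for every $w\sqsubset\xi$ there is $u_{|w|}\sqsubseteq\xi$ with $|w|-|q|<|u_{|w|}|\leq|w|$ and $u_{|w|}\cdot q\sqsubseteq\xi$. Given $w\sqsubset\xi$, take a slightly longer prefix $u\sqsubseteq\xi$ with $|u|\geq|w|+|q|$; then $u\in\pref{Q_q}$, and Corollary~\ref{c.Qq} supplies a word $w'\in Q_{q}$ with $u\sqsubseteq w'$ and $|w'|-|u|\leq|q|$. By Lemma~\ref{l.Qq}(1) there is a $q$-chain ending in $w'$; choosing from it the element $w_{i}$ of largest length not exceeding $|w|$ and writing $w_{i}=u_{|w|}\cdot q$ gives the required prefix, the bound $|u_{|w|}|>|w|-|q|$ following from $|w_{i+1}|-|w_{i}|\leq|q|$ and the maximality of the choice (this is exactly the conversion carried out in the proof of Lemma~\ref{l.Qq}). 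One should double-check the boundary case where $w$ is so short that no $w_{i}$ of length $\leq|w|$ with $|w_{i}|\geq|q|$ exists; here one uses directly that $q\sqsubseteq\xi$ (which holds since the whole $q$-chain, starting with $w_{1}=q$, consists of prefixes of $w'\sqsubset\xi$), so that $u_{|w|}=e$ works.

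The main obstacle is the bookkeeping in $(3)\Rightarrow(1)$: one must be careful that the $q$-chain extracted via Corollary~\ref{c.Qq} and Lemma~\ref{l.Qq} actually reaches far enough to cover the position $|w|$, which is why the proof deliberately passes to a prefix $u$ that is longer than $w$ by at least $|q|$ before invoking Corollary~\ref{c.Qq}; and one must handle the short-prefix boundary case separately. Once the length estimates are lined up, each implication is routine given the earlier results, so no substantial new idea is needed beyond reusing the chain-to-family conversion from Lemma~\ref{l.Qq}.
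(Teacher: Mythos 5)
Your overall structure $(1)\Rightarrow(2)\Rightarrow(3)\Rightarrow(1)$ is sound, and the first two implications are correct as written; the paper itself gives no proof of this corollary, treating it as immediate from Lemma~\ref{l.Qq} and Corollary~\ref{c.Qq}, so your route is the natural one.

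There is, however, a concrete slip in $(3)\Rightarrow(1)$. You select ``the element $w_{i}$ of largest length not exceeding $|w|$'' and set $w_{i}=u_{|w|}\cdot q$. That gives $|u_{|w|}|=|w_{i}|-|q|\leq |w|-|q|$, which violates the required strict lower bound $|u_{|w|}|>|w|-|q|$ from the definition of quasiperiodicity. The conversion in the proof of Lemma~\ref{l.Qq}, which you cite, maximises over the condition $|u_{|w|}|\leq |w|$, i.e.\ it picks the largest chain element $w_{i}$ with $|w_{i}|\leq |w|+|q|$, not with $|w_{i}|\leq |w|$. With that corrected choice everything does line up: such an element exists because $w_{1}=q$ always qualifies; it satisfies $|w_{i}|>|w|$ either because a successor $w_{i+1}$ of length $\leq|w_{i}|+|q|$ would otherwise contradict maximality, or because $i=\ell$ and $|w_{\ell}|=|w'|\geq|u|\geq|w|+|q|$ (this is exactly where your padding of $w$ by $|q|$ pays off); and $w_{i}\sqsubseteq u\sqsubseteq\xi$ since $|w_{i}|\leq|w|+|q|\leq|u|$ and both are prefixes of $w'$ --- a point worth making explicit, since Corollary~\ref{c.Qq} only guarantees $w'\sqsupseteq u$, not $w'\sqsubset\xi$. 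Your separate ``boundary case'' then disappears: for $|w|<|q|$ the maximal choice is $w_{1}=q$, giving $u_{|w|}=e$ automatically. So the proof is correct after repairing this off-by-$|q|$ in the selection criterion.
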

\subsection{A finite generator for quasiperiodic words}
In this part we introduce the finite language $P_q$ which generates the set of
quasiperiodic words as well as the set of quasiperiodic $\omega$-words having
quasiperiod $q$. We investigate basic properties of $P_q$ using simple facts
from combinatorics on words (see e.g.  \cite{Shyr:book}).  We set
\begin{equation}
  \label{eq.Pq}
  P_{q}:= \{v:e\sqsubset v\sqsubseteq q \sqsubset v\cdot q\}\,.
\end{equation}
Then we have the following properties.
\begin{proposition}\label{p.PQ}\mbox{ }\vspace*{-22pt}
  \begin{eqnarray}
    \label{eq.PQ1}
    Q_{q}&=&P_{q}^{*}\cdot q\cup \{e\}\ \subseteq\ P_{q}^{*}\ ,\\
    \label{eq.PQ3} 
    \pref{P_{q}^{*}}&=&\pref{Q_{q}}\ =\ P_{q}^{*}\cdot \pref{q}
  \end{eqnarray}
\end{proposition}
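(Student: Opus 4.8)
The plan is to exploit the $q$-chain description of Lemma~\ref{l.Qq}, the key ingredient being a precise account of one step of a $q$-chain. \emph{Step description:} if $w,w'\in X^{*}\cdot q$ satisfy $w\sqsubset w'$ and $|w'|-|w|\le |q|$, write $w=z\cdot q$, set $d:=|w'|-|w|$, and let $v\sqsubseteq q$ be the prefix of $q$ of length $d$. Since $w\sqsubseteq w'$, both words end in $q$, and $|z|+d\le |z|+|q|$, comparing the length-$(|z|+d)$ prefixes of $w$ and $w'$ forces $w'=z\cdot v\cdot q$; and because $w'=w\cdot (w'/w)$ with $|w'/w|=d$, we also get $v\cdot q=q\cdot (w'/w)$, so $q\sqsubseteq v\cdot q$, which together with $e\sqsubset v\sqsubseteq q$ means exactly $v\in P_{q}$. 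Conversely, for any $z\in X^{*}$ and $v\in P_{q}$ the pair $w:=z\cdot q$, $w':=z\cdot v\cdot q$ satisfies these hypotheses, since $q\sqsubset v\cdot q$ and $|v|\le |q|$.

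Granting the step description, I prove (\ref{eq.PQ1}) by two straightforward inductions. If $w\in Q_{q}\setminus\{e\}$, choose a $q$-chain $(w_{i})_{i=1}^{\ell}$ with $w_{\ell}=w$; from $w_{1}=q=e\cdot q\in P_{q}^{*}\cdot q$ and the step description applied to each $w_{i}\sqsubset w_{i+1}$ one obtains $w_{i}\in P_{q}^{*}\cdot q$ for every $i$, hence $w\in P_{q}^{*}\cdot q$. Conversely, for $v_{1},\dots ,v_{k}\in P_{q}$ the words $w_{i}:=v_{1}\cdots v_{i-1}\cdot q$, $1\le i\le k+1$, form a $q$-chain by the converse part of the step description, so $v_{1}\cdots v_{k}\cdot q\in Q_{q}$ by Lemma~\ref{l.Qq}. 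Together with $e\in Q_{q}$ this gives $Q_{q}=P_{q}^{*}\cdot q\cup\{e\}$, and $Q_{q}\subseteq P_{q}^{*}$ follows since $q\in P_{q}$ (so $P_{q}^{*}\cdot q\subseteq P_{q}^{*}$) and $e\in P_{q}^{*}$.

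For (\ref{eq.PQ3}), note that (\ref{eq.PQ1}) already gives $Q_{q}\subseteq P_{q}^{*}$, hence $\pref{Q_{q}}\subseteq\pref{P_{q}^{*}}$; it therefore suffices to prove $\pref{P_{q}^{*}}\subseteq P_{q}^{*}\cdot\pref{q}$ and $P_{q}^{*}\cdot\pref{q}\subseteq\pref{Q_{q}}$. For the first inclusion: given $u\sqsubseteq v_{1}\cdots v_{m}$ with $v_{i}\in P_{q}$, take the largest $j$ with $v_{1}\cdots v_{j}\sqsubseteq u$; then $u=v_{1}\cdots v_{j}\cdot u'$ with $u'$ a prefix of $v_{j+1}$ (if $j<m$) or $u'=e$ (if $j=m$), and $v_{j+1}\sqsubseteq q$ yields $u'\in\pref{q}$, so $u\in P_{q}^{*}\cdot\pref{q}$. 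For the second inclusion: if $z\in P_{q}^{*}$ and $p\sqsubseteq q$, then $z\cdot q\in P_{q}^{*}\cdot q\subseteq Q_{q}$ by (\ref{eq.PQ1}) and $z\cdot p\sqsubseteq z\cdot q$, so $z\cdot p\in\pref{Q_{q}}$. Chaining $\pref{Q_{q}}\subseteq\pref{P_{q}^{*}}\subseteq P_{q}^{*}\cdot\pref{q}\subseteq\pref{Q_{q}}$ gives the three claimed equalities.

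The real content — and the step I expect to dominate the write-up — is the forward half of the step description: from $w\sqsubset w'$ with length gap at most $|q|$ and both words ending in $q$, one must extract the factor $v$ and verify $v\in P_{q}$ rather than merely $v\in\pref{q}$. It is here that the hypothesis $|w'|-|w|\le |q|$ is essential, since it makes $v$ a prefix of $q$, and that the definition (\ref{eq.Pq}) of $P_{q}$ is precisely what is forced, via the identity $v\cdot q=q\cdot(w'/w)$ giving $q\sqsubseteq v\cdot q$. Once this lemma is in place, both displayed equations reduce to elementary manipulations of factorisations over the finite set $P_{q}$.
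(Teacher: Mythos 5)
Your proof is correct and takes essentially the same route as the paper: both directions of Eq.~(\ref{eq.PQ1}) are obtained from the $q$-chain characterisation of Lemma~\ref{l.Qq}, with your ``step description'' being a fully spelled-out version of the paper's extraction of $v_{i}\in P_{q}$ from consecutive chain elements. You additionally make explicit the converse inclusion $P_{q}^{*}\cdot q\subseteq Q_{q}$ and the derivation of Eq.~(\ref{eq.PQ3}), which the paper leaves as immediate consequences.
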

\begin{proof}
  In order to prove Eq.~(\ref{eq.PQ1}) we show that $w_{i}\in P_{q}^{*}\cdot
  q$ for every $q$-chain $\bigl(w_{i}\bigr)_{i=1}^{\ell}$. This is certainly
  true for $w_{1}=q$.  Now proceed by induction on $i$. Let $w_{i}=
  w_{i}'\cdot q\in P_{q}^{*}\cdot q$ and $w_{i+1}= w_{i+1}'\cdot q$. Then
  $w_{i}'\cdot v_{i}= w'_{i+1}$. Now from $w_{i}\sqsubset w_{i+1}$ we obtain
  $e\sqsubset v_{i} \sqsubseteq q\sqsubset v_{i}\cdot q$, that is, $v_{i}\in
  P_{q}$.

  Eq.~(\ref{eq.PQ3}) is an immediate consequence of Eq.~(\ref{eq.PQ1}).
\end{proof}
Corollary~\ref{c.Qdelta} and Proposition~\ref{p.PQ} imply the
following characterisation of $\omega$-words having quasiperiod $q$.
\begin{equation}\label{eq.Pomega}
  \{\xi: \xi\in X^{\omega}\wedge \xi \mbox{ has quasiperiod }q\}= P_{q}^{\omega}
\end{equation}
\begin{proof}
  Since $P_{q}$ is finite, $P_{q}^{\omega}=\{\xi: \xi\in X^{\omega}\wedge
  \pref\xi\subseteq \pref{P_{q}^*}\}$.
\end{proof}
The following property of words in $P_{q}$ is a consequence of the
Lyndon-Sch\"utzen\-ber\-ger Theorem (see \cite{BerstelPerrin1985,Shyr:book}).
\begin{proposition}\label{p.vbar}
  $v\in P_{q}$ if and only if $|v|\leq |q|$ and there is a prefix $\bar
  v\sqsubset v$ such that $q= v^{k}\cdot\bar v$ for $k= \bigl\lfloor
  |q|/|v|\bigr\rfloor$.
\end{proposition}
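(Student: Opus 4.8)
The statement to prove is Proposition~\ref{p.vbar}: $v\in P_q$ iff $|v|\le|q|$ and there is a prefix $\bar v\sqsubset v$ with $q=v^k\bar v$ for $k=\lfloor|q|/|v|\rfloor$. The plan is to unwind the definition $P_q=\{v:e\sqsubset v\sqsubseteq q\sqsubset v\cdot q\}$ and observe that the two ordering conditions $v\sqsubseteq q$ and $q\sqsubset v\cdot q$ together say precisely that $v$ is a prefix of $q$ and $q$ is a proper prefix of $v\cdot q$; the latter is a classical "commutation"-type condition on which the Lyndon--Schützenberger theorem bites.

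First I would handle the ``if'' direction, which is routine: if $q=v^k\bar v$ with $\bar v\sqsubset v$ and $k=\lfloor|q|/|v|\rfloor$, then clearly $v\sqsubseteq q$ (as $k\ge 1$ when $|v|\le|q|$), and $v\cdot q=v^{k+1}\bar v$ has $q=v^k\bar v$ as a proper prefix, so $e\sqsubset v\sqsubseteq q\sqsubset v\cdot q$ and $v\in P_q$. The only mild point to check is that $|v|\le|q|$ forces $k\ge1$ so that $v\sqsubseteq q$ genuinely holds; the degenerate case $v=q$ gives $k=1,\bar v=e$.

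For the ``only if'' direction, assume $v\in P_q$. Then $v\sqsubseteq q$ gives $|v|\le|q|$ immediately, and $q\sqsubset v\cdot q$ means $q$ is a proper prefix of $vq$, equivalently $vq=qx$ for some nonempty $x$ with $|x|=|v|$ (lengths match up since both sides have length $|v|+|q|$ and $q\sqsubset vq$). By the Lyndon--Schützenberger theorem, $vq=qx$ together with $v$ a prefix of $q$ implies $v$ and $q$ are powers of a common word — more precisely $q$ lies in $v^*\cdot\pref v$, i.e. $q=v^m\bar v$ for some $m\ge1$ and some $\bar v\sqsubset v$ (a proper prefix, possibly empty). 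Comparing lengths, $|q|=m|v|+|\bar v|$ with $0\le|\bar v|<|v|$, so $m=\lfloor|q|/|v|\rfloor=k$ is forced, and $\bar v$ is the remainder prefix. The main obstacle — really the only substantive step — is invoking Lyndon--Schützenberger in exactly the right form: one should be careful that the hypothesis $v\sqsubseteq q$ is what rules out the ``unrelated conjugate'' solutions of $vq=qx$ and pins $q$ down to the clean form $v^k\bar v$; I would state the needed consequence as: if $u\sqsubseteq z$ and $uz\sqsubseteq zX^*$ then $z\in u^*\cdot\pref u$, and cite \cite{BerstelPerrin1985,Shyr:book} for it.
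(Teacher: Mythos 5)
Your proof is correct, but it follows a different route from the one actually written in the paper. You reduce $q\sqsubset v\cdot q$ to the conjugacy equation $v\cdot q=q\cdot x$ with $|x|=|v|$ and then invoke the Lyndon--Sch\"utzenberger structure theorem for its solutions ($v=rs$, $x=sr$, $q=(rs)^m r$), from which $q=v^k\bar v$ with $k=\lfloor|q|/|v|\rfloor$ drops out by length counting. The paper instead gives a self-contained unrolling: from $v\sqsubseteq q\sqsubset v\cdot q$ it deduces $v^{l}\sqsubseteq q\sqsubset v^{l}\cdot q$ for all $l\leq k$ by iterating the prefix relation, and finally $q\sqsubset v^{k+1}$, which together with $v^{k}\sqsubseteq q$ yields the decomposition directly. (The paper's preamble does attribute the proposition to Lyndon--Sch\"utzenberger, so your choice of key lemma matches its stated provenance, but the written proof avoids citing the theorem.) Your version buys a one-line appeal to a classical result; the paper's buys elementarity and makes explicit the intermediate fact $v^{l}\sqsubseteq q$ that is reused implicitly elsewhere. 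One small inaccuracy of emphasis in your write-up: the hypothesis $v\sqsubseteq q$ is not what ``rules out unrelated conjugate solutions'' --- the conjugacy equation $v\cdot q=q\cdot x$ with $v\neq e$ already forces $q\in v^{*}\cdot\pref{v}$ on its own, and $v\sqsubseteq q$ is automatic from $q\sqsubseteq v\cdot q$ once $|v|\leq|q|$. This does not affect correctness.
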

\begin{proof}
  Sufficiency is clear. Let now $v\in P_{q}$. Then $v\sqsubseteq q\sqsubset
  v\cdot q$. This implies $v^{l}\sqsubseteq q\sqsubset v^{l}\cdot q$ as long
  as $l\leq k$ and, finally, $q\sqsubset v^{k+1}$.
\end{proof}
\begin{corollary}\label{c.vbar}
  $v\in P_{q}$ if and only if $|v|\leq |q|$ and there is a $k'\in\bbbn$ such
  that $q\sqsubseteq v^{k'}$.
\end{corollary}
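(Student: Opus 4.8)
The plan is to prove the two directions of the equivalence in Corollary~\ref{c.vbar} by leveraging Proposition~\ref{p.vbar}. For the forward direction, assume $v\in P_{q}$. Proposition~\ref{p.vbar} immediately gives $|v|\leq |q|$ together with a prefix $\bar v\sqsubset v$ such that $q=v^{k}\cdot\bar v$ with $k=\lfloor|q|/|v|\rfloor$. Since $\bar v\sqsubset v$, we have $q=v^{k}\cdot\bar v\sqsubseteq v^{k}\cdot v=v^{k+1}$, so taking $k'=k+1$ witnesses the condition $q\sqsubseteq v^{k'}$.

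For the converse, suppose $|v|\leq|q|$ and $q\sqsubseteq v^{k'}$ for some $k'\in\bbbn$; I may clearly assume $k'$ is chosen minimal, so that $|v|\cdot(k'-1)<|q|\leq|v|\cdot k'$, which forces $k'-1=\lfloor|q|/|v|\rfloor$ unless $|q|$ is an exact multiple of $|v|$ — a case I would handle separately or absorb by noting $v^{k'}=v^{k}\cdot\bar v$ with $\bar v$ the length-$(|q|-k|v|)$ prefix of $v$, where $k=\lfloor|q|/|v|\rfloor$. From $q\sqsubseteq v^{k'}$ and $|v|\leq|q|$ we get $v\sqsubseteq q$. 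It remains to check $q\sqsubset v\cdot q$. Since $q\sqsubseteq v^{k'}$, we have $v\cdot q\sqsubseteq v\cdot v^{k'}=v^{k'+1}$, and as $q$ is a proper prefix of $v^{k'+1}$ (its length is at most $k'|v|<(k'+1)|v|=|v^{k'+1}|$), we obtain $q\sqsubset v^{k'+1}$. But $q$ and $v\cdot q$ are both prefixes of the same word $v^{k'+1}$ and $|q|<|v\cdot q|$, hence $q\sqsubset v\cdot q$. Combined with $e\sqsubset v\sqsubseteq q$, this is exactly the defining condition $e\sqsubset v\sqsubseteq q\sqsubset v\cdot q$ of $P_{q}$ in Eq.~(\ref{eq.Pq}).

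The only genuinely delicate point is the bookkeeping around the exponent: showing that "$\exists k'$ with $q\sqsubseteq v^{k'}$" is equivalent to the specific value $k=\lfloor|q|/|v|\rfloor$ appearing in Proposition~\ref{p.vbar}, i.e. that once $q$ is a prefix of some power of $v$, the decomposition $q=v^{k}\cdot\bar v$ with $\bar v\sqsubset v$ is automatic. This is elementary — if $q\sqsubseteq v^{k'}$ then $q\sqsubseteq v^{m}$ for all $m\geq k'$, and comparing lengths pins down how many full copies of $v$ fit into $q$ — but it is where all the care is needed. Everything else is a direct prefix-manipulation argument, so I would keep the proof short and route it through Proposition~\ref{p.vbar} rather than reproving the Lyndon--Sch\"utzenberger input from scratch.
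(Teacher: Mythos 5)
Your proof is correct and takes essentially the same route the paper intends: the corollary is stated without proof as an immediate consequence of Proposition~\ref{p.vbar}, with the forward direction being exactly that proposition's conclusion $q\sqsubset v^{k+1}$ and the converse a direct check of the defining condition $e\sqsubset v\sqsubseteq q\sqsubset v\cdot q$. The ``delicate bookkeeping'' you flag around $k=\lfloor|q|/|v|\rfloor$ is in fact unnecessary, since your own argument that $q$ and $v\cdot q$ are comparable prefixes of $v^{k'+1}$ already verifies membership in $P_q$ directly.
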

Now set $q_0:= \min_{\sqsubseteq}P_{q}$.  Then in view of
Proposition~\ref{p.vbar} and Corollary~\ref{c.vbar} we have the following.
\begin{equation}
  q=q_0^{k}\cdot \bar q
  \mbox{ for }k= \bigl\lfloor |q|/|q_0|\bigr\rfloor\mbox{ and some }\bar
  q\sqsubset q_0\,.   
\end{equation}
\begin{corollary}\label{c.q0prim}
  The word $q_0$ is primitive, that is, there are no $u\in X^{*}$ and $n> 1$
  such that $q_0= u^{n}$.
\end{corollary}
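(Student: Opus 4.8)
The plan is to argue by contradiction, using the minimality of $q_0$ together with the characterisation of $P_q$ supplied by Corollary~\ref{c.vbar}. First I would record that the prefix-minimum defining $q_0$ really exists: by~(\ref{eq.Pq}) every $v\in P_q$ satisfies $v\sqsubseteq q$, so $P_q$ is a nonempty subset (it contains $q$, since $e\sqsubset q\sqsubseteq q\sqsubset q\cdot q$) of the chain $\pref q$, and $q_0=\min_{\sqsubseteq}P_q$ is simply its shortest element.

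Now suppose, towards a contradiction, that $q_0=u^{n}$ for some $u\in X^{*}$ and some $n>1$. Since $q_0\in P_q$ forces $e\sqsubset q_0$, we must have $u\neq e$; and as $n>1$, the word $u$ is a \emph{proper} prefix of $q_0$, so $u\sqsubset q_0$ and $|u|<|q_0|$. The next step is to show that $u$ itself lies in $P_q$, which will contradict minimality. By Proposition~\ref{p.vbar} we have $q=q_0^{k}\cdot\bar q$ with $\bar q\sqsubset q_0$, hence $q\sqsubseteq q_0^{k+1}=u^{n(k+1)}$; moreover $|u|<|q_0|\leq|q|$, again because $q_0\in P_q$. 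Applying Corollary~\ref{c.vbar} with $v=u$ and $k'=n(k+1)$ yields $u\in P_q$, which is incompatible with $u\sqsubset q_0=\min_{\sqsubseteq}P_q$. Therefore no such $u$ and $n$ exist and $q_0$ is primitive.

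I do not anticipate a genuine obstacle here: once Proposition~\ref{p.vbar} and Corollary~\ref{c.vbar} are in hand the argument is essentially a one-line reduction. The only points requiring a modest amount of care are checking that $u\neq e$ (so that the contradiction with the minimality of $q_0$ is bona fide rather than vacuous) and noting at the outset that $\min_{\sqsubseteq}P_q$ is well-defined because $P_q$ is a chain of prefixes of $q$.
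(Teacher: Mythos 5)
Your proof is correct and follows essentially the same route as the paper: assume $q_0=u^n$ with $n>1$, use $q=q_0^k\cdot\bar q$ to exhibit $q$ as a prefix of a power of the proper prefix $u$, invoke Corollary~\ref{c.vbar} to get $u\in P_q$, and contradict the minimality of $q_0$. The only (harmless) difference is that you bound $q\sqsubseteq q_0^{k+1}=u^{n(k+1)}$ directly instead of decomposing $\bar q$ as a power of $u$ times a remainder, which is if anything slightly cleaner.
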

\begin{proof}
  Assume $q_0=q_1^l$ for some $l>1$. Then $\bar q= q_1^j\cdot \bar q_1$ where
  $\bar q_1\sqsubset q_1$, and, consequently, $q\sqsubset q_1^{k\cdot l+ j+1}$
  contradicting the fact that $q_0$ is the shortest word in $P_q$.
\end{proof}
\begin{proposition}\label{p.PQcdot}
  \begin{enumerate}
  \item If $v\in P_{q}$ and $w\sqsubseteq q$ then $v\cdot w\sqsubseteq q$ or
    $q\sqsubseteq v\cdot w$.\label{p.PQcdot1}
  \item If $v\in P_{q}$ and $|v|\leq|q|-|q_0|$ then $v=q_0^{m}$ for some
    $m\in\bbbn$.\label{p.PQcdot2}
  \end{enumerate}
\end{proposition}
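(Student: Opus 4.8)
The plan is to prove both parts using the combinatorial description of $P_q$ from Proposition~\ref{p.vbar} and Corollary~\ref{c.vbar}, together with the primitivity of $q_0$ (Corollary~\ref{c.q0prim}) and the decomposition $q = q_0^k\cdot\bar q$ with $\bar q\sqsubset q_0$. For part~\ref{p.PQcdot1}, take $v\in P_q$ and $w\sqsubseteq q$. Both $v\cdot w$ and $q$ are prefixes related to $q$: indeed $v\sqsubseteq q$, so $v\sqsubseteq v\cdot w$, and we want to compare $v\cdot w$ with $q$. The point is that two words are prefix-comparable whenever they are both prefixes of a common word. By Corollary~\ref{c.vbar}, $q\sqsubseteq v^{k'}$ for some $k'$, hence $q\cdot q\sqsubseteq v^{k'}\cdot q\sqsubseteq v^{k'}\cdot v^{k'} = v^{2k'}$; therefore $v\cdot w\sqsubseteq v\cdot q\sqsubseteq v\cdot v^{k'} = v^{k'+1}$, and also $q\sqsubseteq v^{k'}\sqsubseteq v^{k'+1}$ provided $k'\ge 1$ (which we may assume since $|v|\le|q|$ forces $k'\ge 1$ unless $v=q$, a trivial case). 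Since $v\cdot w$ and $q$ are both prefixes of $v^{k'+1}$, one is a prefix of the other, which is exactly the claim.

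For part~\ref{p.PQcdot2}, suppose $v\in P_q$ with $|v|\le|q|-|q_0|$. By Proposition~\ref{p.vbar}, $v\sqsubseteq q$ and $q = v^{j}\cdot\bar v$ with $\bar v\sqsubset v$ and $j=\lfloor|q|/|v|\rfloor\ge 1$. Also $q_0\sqsubseteq q$ and $q = q_0^{k}\cdot\bar q$ with $\bar q\sqsubset q_0$. I want to show $v$ is a power of $q_0$. The idea is to use the Lyndon--Schützenberger / Fine--Wilf flavour of reasoning: the word $v^{j}\cdot\bar v = q_0^{k}\cdot\bar q$ is a common extension, and since $|v|\le|q|-|q_0|$ we have enough overlap. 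Concretely, from $v\sqsubseteq q$ and $q_0 = \min_{\sqsubseteq}P_q\sqsubseteq v$ (as $q_0$ is the shortest, hence a prefix of every word in $P_q$ once we know all words of $P_q$ are prefix-comparable to $q$ — actually $v\sqsubseteq q$ and $q_0\sqsubseteq q$ give $q_0\sqsubseteq v$ or $v\sqsubseteq q_0$; the latter with $|v|\le|q|-|q_0|<|q|$ and minimality of $q_0$ forces $v=q_0$, done). So assume $q_0\sqsubset v$, write $v = q_0\cdot s$. Since $v\sqsubseteq q = q_0^{k}\bar q$ and $|v|\le|q|-|q_0| = |q_0^{k-1}\bar q|+|q_0|-|q_0|$... more cleanly: $v\sqsubseteq q_0^{k}\bar q$ and $|v|\le(k-1)|q_0|+|\bar q|+ (|q_0|-|q_0|)$, so in fact $v\sqsubseteq q_0^{k-1}\cdot q_0 = q_0^{k}$ when $|v|\le k|q_0|$, which holds since $|v|\le|q|-|q_0| = (k-1)|q_0|+|\bar q| \le k|q_0|$. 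Thus $v\sqsubseteq q_0^{k}$, i.e. $v$ is a prefix of a power of the primitive word $q_0$; by Corollary~\ref{c.vbar} applied with roles reversed, or directly by primitivity of $q_0$, a word that is both a prefix of $q_0^{k}$ and has $q_0$ as a prefix and for which $q\sqsubseteq v^{k'}$ must itself be a power $q_0^{m}$.

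The main obstacle I anticipate is the last step of part~\ref{p.PQcdot2}: deducing from ``$v\sqsubseteq q_0^{k}$ and $q\sqsubseteq v^{k'}$ for some $k'$ and $q_0$ primitive'' that $v = q_0^{m}$. This is where one genuinely invokes combinatorics on words: a prefix $v$ of $q_0^{k}$ satisfies $v = q_0^{m}\cdot t$ with $t\sqsubset q_0$; plugging into $q_0\sqsubseteq v$ and $q\sqsubseteq v^{k'}$ and using that $q_0$ is primitive (so $q_0$ cannot be a proper internal factor aligning nontrivially with itself, by Fine--Wilf or the primitivity lemma) forces $t = e$. I would isolate this as the crux and handle it by: if $t\ne e$, then $q_0\sqsubseteq v = q_0^{m}t$ gives $q_0\sqsubseteq q_0 t$ hence (cancelling, since $|t|<|q_0|$, and iterating) $q_0$ would be a prefix of $t^{\omega}$-like string shorter than itself — contradiction with $|t|<|q_0|$ unless $t=e$, or more carefully one shows $q_0$ has period $|t|<|q_0|$ dividing $|q_0|$, contradicting primitivity via Corollary~\ref{c.q0prim}. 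Everything else is routine prefix bookkeeping using the inequalities $|v|\le|q|$ and $|v|\le|q|-|q_0|$.
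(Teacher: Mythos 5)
Part 1 of your proof is correct and is essentially the paper's own argument: you exhibit a common word of which both $v\cdot w$ and $q$ are prefixes (you use $v^{k'+1}$ via Corollary~\ref{c.vbar}; the paper uses $v\cdot q$ directly from $v\sqsubseteq q\sqsubset v\cdot q$ and $v\cdot w\sqsubseteq v\cdot q$). One small slip: your intermediate claim that $q\sqsubseteq v^{k'}$ implies $q\cdot q\sqsubseteq v^{k'}\cdot q$ is invalid (the prefix relation is preserved under left, not right, concatenation), but you never actually use it.

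Part 2 has a genuine gap, and it sits exactly where you flag the crux. You correctly reduce to $v=q_0^{m}\cdot t$ with $t\sqsubset q_0$ and must show $t=e$, but neither of your two sketches does this. The first ("$q_0\sqsubseteq q_0^{m}t$ gives $q_0\sqsubseteq q_0t$, cancel and iterate") is vacuous: for $m\geq 1$ both relations hold trivially and carry no information. The second asserts that $q_0$ acquires a period $|t|<|q_0|$ \emph{dividing} $|q_0|$; the divisibility is the entire content of the step and is nowhere established. Note that a primitive word may well have a period shorter than its length (e.g.\ $aabaa$ is primitive with period $3$), so "period $|t|<|q_0|$ contradicts primitivity" is simply false without that divisibility. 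To obtain it you would need a genuine Fine--Wilf argument: $q$ has periods $|v|$ and $|q_0|$ with $|v|+|q_0|\leq|q|$, hence period $d=\gcd(|v|,|q_0|)$; the prefix of $q$ of length $d$ then lies in $P_q$, so minimality of $q_0$ forces $d=|q_0|$, whence $|q_0|$ divides $|v|$ and, with $v\sqsubseteq q_0^{k}$, $v=q_0^{m}$. That route works but is not what you wrote. The paper sidesteps all of it: applying part 1 twice (to $v\in P_q$, $w=q_0$ and to $q_0\in P_q$, $w=v$) and using $|v|+|q_0|\leq|q|$ gives $v\cdot q_0\sqsubseteq q$ and $q_0\cdot v\sqsubseteq q$, two prefixes of $q$ of equal length, hence $v\cdot q_0=q_0\cdot v$; by Lyndon--Sch\"utzenberger $v$ and $q_0$ are powers of a common word, which by primitivity of $q_0$ must be $q_0$ itself. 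You should either adopt that commutation argument or write out the Fine--Wilf step in full.
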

\begin{proof}
  The first assertion follows from $v\sqsubseteq q\sqsubset v\cdot q$ and
  $v\cdot w\sqsubseteq v\cdot q$.

  For the proof of the second one observe that, by the first item $v\cdot
  q_0\sqsubseteq q$ and $q_0\cdot v\sqsubseteq q$ whence $q_0\cdot v= v\cdot
  q_0$. Thus $q_0$ and $v$ are powers of a common word. Since $q_0$ is
  primitive, the assertion follows.
\end{proof}
\begin{theorem}\label{th.wq0}
  If $v\in P_{q}$ and $w\cdot v\sqsubseteq q$ then $w\in \{q_0\}^{*}$.
\end{theorem}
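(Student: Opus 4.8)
The plan is to induct on $|w|$. If $w = e$ the claim is trivial, so assume $w \neq e$ and write $w = w' x$ with $x \in X$; I want to peel off a copy of $q_0$ from the \emph{right} end of $w$ and apply the induction hypothesis to the shorter word. Since $v \in P_q$ and $w\cdot v \sqsubseteq q$, the length bound gives $|w| \le |q| - |v|$. The key observation is that the prefix $u := w\cdot v$ of $q$ is itself a prefix of $q$ that ends in $v$; I will combine this with Proposition~\ref{p.PQcdot}(\ref{p.PQcdot1}) and the factorisation $q = q_0^k \cdot \bar q$ with $\bar q \sqsubset q_0$ to force $w$ to be a multiple of $q_0$.

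The main step: from $w\cdot v \sqsubseteq q$ and $|v| \ge 1$ we get $w \sqsubseteq q$, so by Proposition~\ref{p.PQcdot}(\ref{p.PQcdot1}) applied to any $s \in P_q$ with $s \sqsubseteq q$ we have control of how prefixes of $q$ interact with elements of $P_q$; in particular I would like to show first that $q_0 \sqsubseteq w$ (when $w \neq e$). Indeed $|q_0| \le |q| - |\bar q|$ and, since $w \neq e$ and $w\cdot v \sqsubseteq q$ with $v \in P_q$, one checks $|w| \ge |q_0|$: otherwise $|w\cdot v| < |q_0| + |q| - |q_0| \cdot(\text{something})$ — more carefully, if $|w| < |q_0|$ then $w \sqsubset q_0 \sqsubseteq q$, and using Corollary~\ref{c.vbar} together with $w \cdot v \sqsubseteq q \sqsubseteq v^{k'}$ one derives that $w$ commutes with $v$ and hence (via primitivity arguments as in Proposition~\ref{p.PQcdot}(\ref{p.PQcdot2})) that $w$ is a power of $q_0$, forcing $w = e$, a contradiction. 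Once $q_0 \sqsubseteq w$, write $w = q_0 \cdot \tilde w$; then $q_0 \cdot (\tilde w \cdot v) = w \cdot v \sqsubseteq q = q_0^k \cdot \bar q$, and cancelling the common prefix $q_0$ (legitimate since $k \ge 1$, as $|q_0| \le |q|$) gives $\tilde w \cdot v \sqsubseteq q_0^{k-1}\cdot \bar q \sqsubseteq q$. Now $v \in P_q$ and $\tilde w \cdot v \sqsubseteq q$ with $|\tilde w| < |w|$, so by the induction hypothesis $\tilde w \in \{q_0\}^*$, whence $w = q_0 \cdot \tilde w \in \{q_0\}^*$.

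The hard part will be the base-of-the-peel argument showing $q_0 \sqsubseteq w$ whenever $w \neq e$ — equivalently, ruling out the case $e \sqsubset w \sqsubset q_0$. The clean way is: if $w \sqsubset q_0$ then since also $w \cdot v \sqsubseteq q$ and, by Corollary~\ref{c.vbar}, $q \sqsubseteq v^{k'}$ for suitable $k'$, we get $w \cdot v \sqsubseteq v^{k'}$, i.e. $w \sqsubseteq v^{k'-1}\cdot v$; comparing with $v \sqsubseteq v^{k'}$ we obtain that $v$ and $w\cdot v$ are both prefixes of $v^{k'}$, so $w$ is a "shift" period of $v^{k'}$, giving $w\cdot v = v\cdot w$ by Fine–Wilf/Lyndon–Schützenberger. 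Then $w$ and $v$ are powers of a common primitive word $p$; since $v\in P_q$, Proposition~\ref{p.vbar} and primitivity of $q_0$ (Corollary~\ref{c.q0prim}) force $p = q_0$, so $w \in \{q_0\}^*$ with $|w| < |q_0|$, hence $w = e$. This contradicts $w \neq e$ and completes the induction.
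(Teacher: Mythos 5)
Your argument is correct, but it is organised quite differently from the paper's. The paper first observes that $q_0\sqsubseteq v$ for every $v\in P_q$, so $w\cdot v\sqsubseteq q$ gives $w\cdot q_0\sqsubseteq q\sqsubseteq q_0^{k+2}$, and then applies the periodicity (Fine--Wilf / Lyndon--Sch\"utzenberger) argument \emph{once}: the word $w\cdot q_0$ has periods $|w|$ and $|q_0|$, so $w\cdot q_0$ and $q_0$ are powers of a common word, and primitivity of $q_0$ (Corollary~\ref{c.q0prim}) yields $w\in\{q_0\}^{*}$ directly, with no induction. You instead induct on $|w|$, peeling one $q_0$ off the left of $w$ at each step -- the cancellation $\tilde w\cdot v\sqsubseteq q_0^{k-1}\cdot\bar q\sqsubseteq q$ is legitimate since $\bar q\sqsubset q_0$ -- and you invoke the periodicity lemma only to exclude the residual case $e\sqsubset w\sqsubset q_0$: there $w\cdot v\sqsubseteq q\sqsubseteq v^{k'}$ gives $w\cdot v$ the two periods $|w|$ and $|v|$, hence $w$ and $v$ are powers of a primitive word $p$ with $|p|\le|w|<|q_0|$, and Corollary~\ref{c.vbar} places $p$ in $P_q$, contradicting the minimality of $q_0$. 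That closing step contains one small imprecision: primitivity does not ``force $p=q_0$'' ($P_q$ may contain primitive words other than $q_0$, e.g.\ $\mathit{aabaaaab}\in P_{\mathit{aabaaaaba}}$); what actually follows is $p\in P_q$, hence $|p|\ge|q_0|$, which contradicts $|p|\le|w|<|q_0|$ just as well. The half-finished length inequality in your second paragraph should simply be dropped in favour of the clean version in your last paragraph. What your route buys is that the overlap argument is needed only in the short range $|w|<|q_0|$, where it is most transparent; what it costs is an induction that the paper's single application of the same combinatorial lemma renders unnecessary.
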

\begin{proof}If $v\in P_{q}$ then $q_0\sqsubseteq v$. Thus it suffices to
  prove the assertion for $q_{0}$.

  Let $w\cdot q_0\sqsubseteq q= q_0^{k}\cdot \bar q$. Then $w\cdot
  q_0\sqsubseteq q_0^{k+2}$ and, trivially, $q_0\sqsubseteq q_0^{k+2}$. Since
  $|w\cdot q_0|+ |q_0|< |q_0^{k+2}|$, $w\cdot q_0$ and $q_0$ are powers of a
  common word. The assertion follows because $q_0$ is primitive.
\end{proof}
\section{Codes}
\label{sec.codes}
In this section we investigate in more detail the properties of the star root
of $P_{q}$, that is, of the smallest subset $V\subseteq P_{q}$ such that
$V^{*}=P_{q}^{*q}$. It turns out that $\strr{P_q}$ is a suffix code which,
additionally, has a bounded delay of decipherability. This delay is closely
related to the largest power of $q_{0}$ being a prefix of $q$.


According to
\cite{BerstelPerrin1985
} a subset $C\subseteq X^{*}$ is a code of a \emph{delay of decipherability
  $m\in\bbbn$\/} 
if and only if for all $w,w',v_1,\dots, v_m \in C$ and $u\in C^*$ the relation
$w\cdot v_1\cdots v_m \sqsubseteq w'\cdot u$ implies $w = w'$.
Observe that $C\subseteq X^{*}\setminus \{e\}$ is a prefix code, that is,
$w,w',\in C$ and $w\sqsubseteq w'$ imply $w = w'$, if and only if $C$
has 
delay $0$. A subset $C\subseteq X^{*}\setminus \{e\}$ is referred to as a
\emph{suffix code} if no word $w\in C$ is a proper suffix of another word
$v\in C$.

Define now the \emph{star-root} of $P_{q}$:\[\strr{P_q}:= P_{q}\setminus
\bigl(P_{q}^{2}\cdot P_{q}^*\bigr)\] It holds the following.
\begin{equation}\label{eq.starroot}
  \strr{P_q}= \bigl(P_{q}\setminus \{q_0\}^{*}\bigr)\cup\{q_0\}\subseteq 
  \{q_0\}\cup \{v:v\sqsubseteq q\wedge |q_0|+|v|>|q|\}
\end{equation}
\begin{proof}
  First we prove the identity.  The inclusion ``$\subseteq$'' follows from
  $\bigl(P_{q}\setminus \{q_0\}^{*}\bigr)\cup\{q_0\}\subseteq P_{q}\subseteq
  \bigl((P_{q}\setminus \{q_0\}^{*})\cup\{q_0\}\bigr)^{*}$.

  To prove the reverse inclusion assume $\ell>1$ and $v_{1}\cdots v_{\ell}\in
  P_{q}$ for $v_{i}\in P_{q}$. Then $|q_0|\leq |v_{i}|$ and thus
  $|q_0|+|v_{i}|\leq|q|$ for all $i$. According to
  Proposition~\ref{p.PQcdot}.\ref{p.PQcdot2} we have $v_{i}\in \{q_0\}^{*}$
  which shows $P_{q}\cap \bigl(P_{q}^{2}\cdot P_{q}^*\bigr)\subseteq
  \{q_0\}^{*}$.

  The remaining inclusion now follows from
  Proposition~\ref{p.PQcdot}.\ref{p.PQcdot2}.
\end{proof}
Next we are going to show that $\strr{P_q}$ is a suffix code having a bounded
delay of decipherability.
\begin{corollary}\label{c.suffix1}
  $\strr{P_q}$ is a suffix code.
\end{corollary}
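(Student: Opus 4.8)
The plan is to argue by contradiction, with Theorem~\ref{th.wq0} doing the decisive work. Suppose some $w\in\strr{P_q}$ were a proper suffix of some $v\in\strr{P_q}$, say $v=u\cdot w$ with $u\neq e$. Since $\strr{P_q}\subseteq P_{q}$, the definition of $P_q$ in Eq.~(\ref{eq.Pq}) gives $v\sqsubseteq q$, hence $u\cdot w\sqsubseteq q$. Because $w\in P_{q}$ as well, Theorem~\ref{th.wq0} applies to the factorisation $u\cdot w\sqsubseteq q$ and forces $u\in\{q_0\}^{*}$; as $u\neq e$ this means $u=q_0^{m}$ for some $m\geq 1$.

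Next I would note that this already rules out $v=q_0$: indeed $|v|=|u|+|w|>m\cdot|q_0|\geq|q_0|$, using $w\neq e$. Hence $v\in P_{q}\setminus\{q_0\}^{*}$, so by the inclusion displayed in Eq.~(\ref{eq.starroot}) we have $v\sqsubseteq q$ together with $|q_0|+|v|>|q|$. The same reasoning should then be transported to $w$: from $v=q_0^{m}\cdot w$ we see that $w\in\{q_0\}^{*}$ would give $v\in\{q_0\}^{*}$, contradicting the previous line; so $w\notin\{q_0\}^{*}$, in particular $w\neq q_0$, whence $w\in P_{q}\setminus\{q_0\}^{*}$ as well, and Eq.~(\ref{eq.starroot}) yields $|q_0|+|w|>|q|$.

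The contradiction then drops out of a single length count: $u\cdot w\sqsubseteq q$ reads $m\cdot|q_0|+|w|\leq|q|$, while we just derived $|q|<|q_0|+|w|$; combining the two gives $m\cdot|q_0|<|q_0|$, i.e.\ $m<1$, contradicting $m\geq 1$. Therefore no word of $\strr{P_q}$ is a proper suffix of another, i.e.\ $\strr{P_q}$ is a suffix code.

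I expect the only genuine content to be the application of Theorem~\ref{th.wq0}: recognising that a proper-suffix relation $v=u\cdot w$ between two elements of $P_q$ places $u\cdot w$ as a prefix of $q$ with $w\in P_q$, so that the statement ``the part of $q$ preceding a $P_q$-word is a power of $q_0$'' is exactly what is needed. Everything afterwards is bookkeeping with the explicit description of $\strr{P_q}$ from Eq.~(\ref{eq.starroot}) and one elementary inequality, so I do not anticipate a serious obstacle beyond keeping the two sub-cases ($v=q_0$ versus $v\neq q_0$, and likewise for $w$) straight.
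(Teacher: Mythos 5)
Your proof is correct and follows essentially the same route as the paper: the decisive step in both is applying Theorem~\ref{th.wq0} to the proper-suffix factorisation to force the leading part into $\{q_0\}^{*}$. The paper then finishes by invoking Proposition~\ref{p.PQcdot}.\ref{p.PQcdot2} directly to conclude that both words collapse to $q_0$, whereas you reach the contradiction via the length inequality $|q_0|+|w|>|q|$ from Eq.~(\ref{eq.starroot}); this is only a cosmetic difference, since that inequality is itself a repackaging of the same proposition.
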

\begin{proof}
  Assume $u=w\cdot v$ for some $u,v\in \strr{P_q}\ ,u\ne v$. Then
  Theorem~\ref{th.wq0} proves $w\in \{q_0\}^{*}\subseteq P_{q}$. If $w\ne e$,
  in view of $u\sqsubseteq q$ Proposition~\ref{p.PQcdot}.\ref{p.PQcdot2}
  implies $v\in \{q_0\}^{*}$ and hence $u\in \{q_0\}^{*}$. Thus $u=v=q_{0}$
  contradicting $u\ne v$.
\end{proof}

\begin{theorem}\label{th.delay}
  Let $q=q_0^{k}\cdot\bar q$ where $\bar q\sqsubset q_0$. Then $\strr{P_q}$ is
  a code having a delay of decipherability of at most $k+1$.
\end{theorem}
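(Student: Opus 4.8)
The plan is to prove that $\strr{P_q}$ has delay of decipherability at most $k+1$ by working directly from the definition: assume $w\cdot v_1\cdots v_{k+1}\sqsubseteq w'\cdot u$ with $w,w',v_1,\dots,v_{k+1}\in\strr{P_q}$ and $u\in(\strr{P_q})^*$, and derive $w=w'$. Without loss of generality $|w|\le|w'|$, so $w\sqsubseteq w'$. The first move is to handle the easy case $w=w'=q_0$ separately; otherwise, since $\strr{P_q}\subseteq\{q_0\}\cup\{v:v\sqsubseteq q\wedge|q_0|+|v|>|q|\}$ by Eq.~(\ref{eq.starroot}), every element of $\strr{P_q}$ other than $q_0$ is a prefix of $q$ of length exceeding $|q|-|q_0|$. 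This near-prefix-code structure is what makes the delay finite: a word in $\strr{P_q}$ can only be a proper prefix of another when the shorter one is a power of $q_0$, by Theorem~\ref{th.wq0} together with Corollary~\ref{c.suffix1}'s argument.

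The key step is to show that $w\sqsubset w'$ forces $w\in\{q_0\}^{*}$ and, more precisely, $w=q_0^{j}$ with $j\le k$ (since $q_0^{k+1}$ is not a prefix of $q$, hence not in $\pref{\strr{P_q}}$ unless it equals one of the long near-prefixes, which it cannot by a length/primitivity argument). Indeed, if $w\sqsubset w'$ and $w'\sqsubseteq q$, then writing $w' = w\cdot x$ we get $x\cdot(\text{rest of }w')\sqsubseteq q$ in a way that lets Theorem~\ref{th.wq0} apply to conclude $w\in\{q_0\}^{*}$. Now concatenate: $w'\cdot u$ begins with $w'$ which begins with $w$, and since $w=q_0^j$, reading off the factorization of $w'\cdot u$ from the left in terms of $\strr{P_q}$-words, I claim the parsing must begin by consuming $q_0$ exactly $j$ times before it can diverge — but then $w$ has been fully matched against a prefix of $w'$, and what remains is that $w\cdot v_1\cdots v_{k+1}$ and $w'\cdot u$ agree on a prefix of length $|w|+|v_1|+\cdots+|v_{k+1}|$. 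Since each $|v_i|\ge|q_0|$, this common prefix has length at least $|w|+(k+1)|q_0| \ge (k+1)|q_0| > |q|$, so it extends beyond $q$; combined with $w' = q_0^{j'}\cdot(\text{something})$ from the same prefix analysis and the primitivity of $q_0$ (Corollary~\ref{c.q0prim}), one forces $j=j'$ and hence $w=w'$.

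More carefully, the argument I would write is: suppose for contradiction $w\ne w'$, so $w\sqsubset w'$ and by the above $w=q_0^{j}$ for some $j\le k$. Since $w'\in\strr{P_q}$ and $q_0^{j}\sqsubset w'\sqsubseteq q = q_0^{k}\bar q$, the word $w'$ is determined as a prefix of $q$ longer than $q_0^{j}$. Consider $w\cdot v_1\cdots v_{k+1}$: its prefix of length $|q|$ is exactly $q$ (because $|w|+|v_1|+\cdots+|v_{k+1}| \ge (k+1)|q_0|>|q|$ and each piece, being in $\strr{P_q}$, is a prefix of $q$ or equals $q_0$, so their concatenation follows $q$ by repeatedly invoking Proposition~\ref{p.PQcdot}.\ref{p.PQcdot1}). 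The same holds for $w'\cdot u$. Now I parse $q$ from the left using $\strr{P_q}$: since $\strr{P_q}$ is a suffix code but I need a prefix-parsing control, I use instead that the \emph{only} elements of $\strr{P_q}$ that are prefixes of $q_0^{k}\bar q$ and shorter than it are the powers $q_0^1,\dots$ — no, this needs the fact that any $v\in\strr{P_q}\setminus\{q_0\}$ has $|v|>|q|-|q_0|$, so $v$ cannot fit twice, and if $v\ne q_0$ then $v$ is not a prefix of any other element. Hence in $w'\cdot u = q_0^{j'}\cdots$, reading $\strr{P_q}$-factors left to right, the initial segment is a maximal run of $q_0$'s of some length, matching $w=q_0^{j}$ only if $j=j'$; but $j' $ is the $q_0$-exponent of the prefix $w'$, and $w=q_0^j\sqsubset w'$ gives $j<j'$ unless... and here the divergence of the $(k+1)$-th factor $v_{k+1}$ versus $u$ beyond position $|q|$ is exactly what yields the contradiction, since beyond $q$ the word $w\cdot v_1\cdots v_{k+1}$ is forced to continue as $q$ again (each $v_i$ restarts a copy of $q$ up to its length), pinning down enough letters to force $w=w'$.

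The main obstacle I anticipate is the bookkeeping in the previous paragraph: turning the intuitive ``you can only diverge after $k+1$ blocks because each block re-synchronises you with a copy of $q$'' into a clean induction. The cleanest route is probably to prove the auxiliary claim that for any $v\in\strr{P_q}$ and any $z\in(\strr{P_q})^{*}$ with $|z|\ge|q|-|v|$, one has $v\cdot q\sqsubseteq v\cdot z$, i.e. each $\strr{P_q}$-factor followed by enough more material reproduces $q$ as $v\cdot q$'s prefix — this follows from Corollary~\ref{c.Qq} / Proposition~\ref{p.PQ} since $v\cdot z\in P_q^*\cdot\pref q = \pref{Q_q}$ and $v\sqsubseteq q$. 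Granting that, after consuming $w$ (length $\le k|q_0|\le|q|$) and then $k+1$ further factors $v_1,\dots,v_{k+1}$ of length $\ge|q_0|$ each, we have matched a common prefix of length $>|q|$ against $w'\cdot u$; the overhang past $|q|$ together with the rigidity $q=q_0^k\bar q$ and primitivity of $q_0$ (Corollary~\ref{c.q0prim}) forces the two $q_0$-prefixes $w$ and $w'$ to coincide. I would present the argument in this order: (1) reduce to $|w|\le|w'|$ and the nontrivial case $w\ne w'$; (2) Theorem~\ref{th.wq0} gives $w\in\{q_0\}^*$, and $w\sqsubseteq q$ gives $w=q_0^j$, $j\le k$; (3) the re-synchronisation claim shows both sides have $q$ as their length-$|q|$ prefix; (4) extend one block past $|q|$ and use primitivity to conclude $w'=q_0^{j}=w$, contradiction.
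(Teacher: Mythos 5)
There is a genuine gap, and it sits exactly at what you call the key step. Your claim that two comparable codewords $w\sqsubset w'$ in $\strr{P_q}$ force the shorter one to be a power of $q_0$ is false: in the paper's own Example~\ref{ex.delay}, $q=\mathit{aabaaaaba}$ gives $q_0=\mathit{aabaa}$ and $\strr{P_q}=\{\mathit{aabaa},\mathit{aabaaaab},\mathit{aabaaaaba}\}$, so $\mathit{aabaaaab}\sqsubset \mathit{aabaaaaba}$ are comparable codewords whose shorter member has length $8$, not a multiple of $|q_0|=5$. The appeal to Theorem~\ref{th.wq0} cannot repair this: that theorem constrains the word \emph{preceding} a $P_q$-factor inside $q$ (from $w\cdot v\sqsubseteq q$ with $v\in P_q$ it concludes $w\in\{q_0\}^*$), whereas the difference word $x$ with $w'=w\cdot x$ need not lie in $P_q$ (here $x=a\notin P_q$), so nothing is inferred about $w$. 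Since your steps (2)--(4) --- the $q_0$-run parsing of $w'\cdot u$, the identification $j=j'$, and the final primitivity argument --- all presuppose $w=q_0^j$, the remainder of the proposal does not go through; moreover the concluding ``pinning down enough letters to force $w=w'$'' is asserted rather than derived, and the left-to-right parsing induction you flag as the main obstacle is never actually carried out.

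The correct target for Theorem~\ref{th.wq0} is the \emph{difference} $x=w^{-1}w'$, not $w$. Your other ingredients then suffice and no parsing is needed: from $|v_i|\geq|q_0|$, $(k+1)|q_0|>|q|$ and iterated use of Proposition~\ref{p.PQcdot}.\ref{p.PQcdot1} one gets $q\sqsubseteq v_1\cdots v_{k+1}$, hence $w\cdot q\sqsubseteq w'\cdot u=w\cdot x\cdot u$, i.e.\ $q\sqsubseteq x\cdot u$. Since $|x|=|w'|-|w|\leq|q|-|q_0|$ and every word of $(\strr{P_q})^*\setminus\{e\}$ begins with $q_0$, this yields $x\cdot q_0\sqsubseteq q$, and now Theorem~\ref{th.wq0} gives $x\in\{q_0\}^{*}\setminus\{e\}$. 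But then $q_0$ is a proper suffix of $w'\in\strr{P_q}$, contradicting Corollary~\ref{c.suffix1}. This is the paper's argument; note that it never needs to know whether $w$ itself is a power of $q_0$.
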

\begin{proof}
  We have to show that if the words $v\cdot w_{1}\cdots w_{k+1}$ and $v'\cdot
  w'_{1}\cdots w'_{k+1}$, where $v,w_{1},\dots, w_{k+1},$ $v', w'_{1},\dots,
  w'_{k+1}\in \strr{P_q}$ are comparable w.r.t. ``$\sqsubseteq$'' then $v=v'$.

  Without loss of generality, assume $v\sqsubset v'$. Then $|q_0|\leq|v|<
  |v'|\leq|q|$. We have $|w_{i}|,|w_{i}'|\geq |q_0|$. Thus $|w_{1}\cdots
  w_{k+1}|,|w'_{1}\cdots w'_{k+1}|> |q|$. Moreover, according to
  Proposition~\ref{p.PQcdot}.\ref{p.PQcdot1} $q\sqsubseteq w_{1}\cdots
  w_{k+1}$ and $q\sqsubseteq w'_{1}\cdots w'_{k+1}$, whence $v\cdot q\sqsubset
  v'\cdot q$. Then in view of the inequality $|v|+ |q| \geq |v'|+|q_0|$ we
  have $q\sqsupseteq w\cdot q_0$ for the word $w\ne e$ with $v\cdot w= v'$
  and, according to Theorem~\ref{th.wq0} $w\in \{q_0\}^{*}$. This contradicts
  the fact that $\strr{P_q}$ is a suffix code.
\end{proof}
We provide examples that, on the one hand, the bound in Theorem~\ref{th.delay}
cannot be improved and, on the other hand that it is not always
attained. Since for $q=q_{0}^{k},\ k\in\bbbn,$ the code $\strr{P_q}=\{q_{0}\}$
is a prefix code, we consider only non-trivial cases.
\begin{example}\label{ex.delay}\upshape{}
  Let $q:=\mathit{aabaaaaba}$. Then $q_0=\mathit{aabaa}$, $k=1$ and
  $\strr{P_q}=P_{q}=\{\,q_0,\mathit{aabaaaab},q\ \}$ which is a code having a
  delay of decipherability $2$.

  Indeed $
  \begin{array}[t]{rclclr}
    \mathit{aabaaaabaa}&=&q_0\cdot q_0&\sqsubseteq& q\cdot q_0 \mbox{\quad
      or}&\hspace{37mm}\\ 
    \mathit{aabaaaabaa}&=&q_0\cdot q_0&\sqsubseteq& \mathit{aabaaaab}\cdot
    q_0\,.&
  \end{array}
  $\\[-11pt]
  \mbox{ }\hfill\qed
\end{example}

\medskip{}

Moreover $q\cdot q_0\notin Q_q$. Thus our Example~\ref{ex.delay} shows also
that $q\cdot P_q^*$ need not be contained in $Q_q$.
\begin{example}\label{ex.delay1}\upshape{}
  Let $q:=\mathit{aba}$. Then $k=1$ and $P_{q}=\{\mathit{ab},\mathit{aba}\}$
  is a code having a delay of decipherability~$1$.\hspace*{1pt}\hfill\qed
\end{example}

\section{Subword Complexity}\label{sec.subw}
In this section we investigate the subword complexity of the language
$Q_q$. To this end we derive general relations between the numbers of words of
a certain length for regular languages, their prefix- and their
infix-languages.  Then using elementary methods of the theory of formal power
series (cf. \cite{BerstelPerrin1985,book:SalomaaSoi}) we estimate values
characterising the exponential growth of the family $(|\infix{Q_q}\cap
X^n|)_{n\in\bbbn}$.

We start with some prerequisites on the number of subwords of regular
star-languages.
\begin{lemma}\label{l.L}
  If $L\subseteq X^*$ is a regular language then there is a $k\in\bbbn$ such
  that
  \begin{equation}\begin{array}{rcccl}
      |L\cap X^{n}|&\le&|\pref{L}\cap X^n|&\le&\sum_{i=0}^{k}|L\cap X^{n+i}|\\[4pt]
      |\pref{L}\cap X^n|&\le&|\infix{L}\cap X^n|&\le& k\cdot |\pref{L}\cap X^n|
    \end{array}
  \end{equation} 
\end{lemma}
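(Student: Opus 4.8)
The plan is to prove the four inequalities separately, exploiting the regularity of $L$ via its finite set of left derivatives. Let $k$ be a bound on the lengths of the words appearing in the automaton accepting $L$ — more precisely, I would choose $k$ so that from every state of a fixed finite automaton for $L$ some final state is reachable within $k$ steps (taking $k$ large enough also to serve in the infix estimate, see below). The first inequality $|L\cap X^n|\le|\pref L\cap X^n|$ is immediate since $L\subseteq\pref L$. The fourth, $|\infix L\cap X^n|\le k\cdot|\pref L\cap X^n|$, is the symmetric counterpart of the third together with a counting argument, so the real content lies in the two middle bounds.

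For $|\pref L\cap X^n|\le\sum_{i=0}^{k}|L\cap X^{n+i}|$: given a prefix $w\in\pref L\cap X^n$, there is some $\eta$ with $w\eta\in L$; running the automaton on $w$ lands in a state from which a final state is reachable in some number $i\le k$ of steps, so there is $u\in X^i$ with $wu\in L$. The map $w\mapsto wu$ is injective on words of a fixed output length $n+i$ (recover $w$ by deleting the last $i$ letters), so summing over $i=0,\dots,k$ gives the bound. This uses only that $L$ is accepted by a finite automaton in which no ``useless'' states occur, which we may assume after trimming.

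For $|\pref L\cap X^n|\le|\infix L\cap X^n|$: every prefix of a word in $L$ is in particular an infix of a word in $L$ (take $w=e$ in the definition $\infix L=\bigcup_{w}\pref{L/w}$), so $\pref L\subseteq\infix L$ and the inequality is trivial. Finally, for $|\infix L\cap X^n|\le k\cdot|\pref L\cap X^n|$: an infix $v$ of a word in $L$ is $v\sqsubseteq L/w$ for some $w$; since $L$ is regular there are only finitely many distinct left derivatives $L/w_1,\dots,L/w_m$, and each such derivative is itself regular, hence is $\pref{}$-dominated in the above sense. I would either absorb $m$ into the constant, or — to get the cleaner statement with the \emph{same} $k$ — note that a word $v$ of length $n$ is an infix of some word in $L$ iff there is a word $z\in L$ of length $\le n+2k$ having $v$ as an infix, because one can prepend a short path from the start state to the state reached after reading a suitable $w$ with $v\sqsubseteq L/w$, and append a short accepting tail; then count: each infix of length $n$ determines, by recording its start position (at most $k$ choices, since $z$ begins with a word reaching the relevant state within $k$ steps) and the remaining structure, a prefix of a word in $L$ of length between $n$ and $n+k$, giving the factor $k$.

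The main obstacle is getting the \emph{uniform} constant $k$ to work in all four inequalities simultaneously rather than a different constant in each; this is a bookkeeping issue about how one trims and normalises the automaton for $L$ (ensuring every state is both reachable and co-reachable within $k$ transitions, and that $k$ also bounds the ``pumping'' length relevant to the infix count). Once the automaton is fixed with that single parameter $k$, each of the four estimates is a short injective-map-plus-counting argument as sketched above.
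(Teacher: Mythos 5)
The paper itself gives no proof of this lemma (it only remarks that one may take $k$ to be the number of states of an automaton accepting $L$), so your argument has to stand on its own. The first three inequalities are fine: $L\subseteq\pref{L}$ and $\pref{L}\subseteq\infix{L}$ are trivial, and your completion argument for $|\pref{L}\cap X^n|\le\sum_{i=0}^{k}|L\cap X^{n+i}|$ (extend each prefix of a word of $L$ by a shortest accepting tail of some length $i\le k$ in a trimmed automaton, and note injectivity of $w\mapsto wu$ for each fixed $i$) is correct.

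The gap is in the last inequality. Your map sends an infix $v\in X^n$ to a short left context $w$ of length $j\le k$ together with the word $wv\in\pref{L}\cap X^{n+j}$; this is injective, but what it yields is $|\infix{L}\cap X^n|\le\sum_{j=0}^{k}|\pref{L}\cap X^{n+j}|$, not $k\cdot|\pref{L}\cap X^n|$. These are genuinely different quantities, since $|\pref{L}\cap X^{n+j}|$ can exceed $|\pref{L}\cap X^n|$ by a factor of $|X|^{j}$. Concretely, for $L=a^{N}\{a,b\}^{*}$ one has $|\pref{L}\cap X^n|=1$ for $n\le N$ while $|\infix{L}\cap X^n|=2^{n}$, so no bound with $k$ equal to the number of states (roughly $N$) can hold; the constant must grow exponentially in the automaton size. (Your first alternative, absorbing the number $m$ of left derivatives into the constant, has the same defect: it bounds $|\infix{L}\cap X^n|$ by $\sum_{i}|\pref{L/w_i}\cap X^n|$, and $|\pref{L/w_i}\cap X^n|$ need not be comparable to $|\pref{L}\cap X^n|$.) The missing step is elementary: since $\pref{L}$ is prefix-closed, every word in $\pref{L}\cap X^{n+j}$ has its length-$n$ prefix in $\pref{L}\cap X^n$, and each length-$n$ word has at most $|X|^{j}$ extensions of length $n+j$, whence $|\pref{L}\cap X^{n+j}|\le|X|^{j}\cdot|\pref{L}\cap X^n|$. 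Inserting this gives $|\infix{L}\cap X^n|\le(k_0+1)\cdot|X|^{k_0}\cdot|\pref{L}\cap X^n|$ with $k_0$ the number of states, which does prove the lemma --- but with $k$ of the order $k_0\cdot|X|^{k_0}$ rather than $k_0$ itself, so the paper's own remark about the choice of $k$ should likewise be read with this caveat.
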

As a suitable $k$ one may choose the number of states of an automaton
accepting the language $L\subseteq X^*$.

Moreover, Corollary~4 of \cite{St85:probl} shows that for every regular
language $L\subseteq X^{*}$ there are constants $c_{1},c_{2}>0$ and a
$\lambda\ge 1$ such that
\begin{equation}\label{eq.St85}
  c_{1}\cdot \lambda^n\le |\pref{L^*}\cap X^n|\le c_{2}\cdot \lambda^n\,.
\end{equation}
A consequence of Lemma~\ref{l.L} is that Eq.~(\ref{eq.St85}) holds also (with
constant $k\cdot c_{2}$ instead of $c_{2}$) for $\infix{L^*}$.
\subsection{The subword complexity of $Q_q$}

It is now our task to estimate the value $\lambda_{q}$ which satisfies
$c_{1}\cdot \lambda_{q}^n\le |\infix{P_{q}^*}\cap X^n|\le k\cdot c_{2}\cdot
\lambda_{q}^n$. Following Lemma~\ref{l.L} and Eqs.~(\ref{eq.St85}) and
(\ref{eq.PQ3}) it holds
\begin{equation}\label{eq.radius}
  \lambda_{q} = \limsup_{n\to\infty} \sqrt[n\,]{|P_{q}^*\cap X^{n}|}
\end{equation}
which is the inverse of the convergence radius
$\mathsf{rad}\,\mathfrak{s}^{*}_{q}$ of the power series
$\mathfrak{s}^{*}_{q}(t):=\sum_{n\in\bbbn}|P_{q}^*\cap X^{n}|\cdot t^{n}$ (the
structure generating function of the language $P_{q}^*$).

If $|q_{0}|$ divides $|q|$ then $P_{q}^*= \{q_0\}^{*}$ whence
$\lambda_{q}=1$. Therefore, in the following considerations we may assume that
$|q|/|q_{0}|\notin \bbbn$.

Since $\strr{P_{q}}$ is a code, we have $\mathfrak{s}^{*}_{q}(t)= \frac{1}{1-
  \mathfrak{s}_{q}(t)}$ where $\mathfrak{s}_{q}(t):=\sum_{v\in
  \sqrt[*\,]{P_{q}}}t^{|v|}$ is the structure generating function of the
finite language $\strr{P_{q}}$. Thus the convergence radius
$\mathsf{rad}\,\mathfrak{s}^{*}_{q}$ is the smallest root of
$1-\mathfrak{s}_{q}(t)$. It is readily seen that this root is positive. So
$\lambda_{q}$ is the largest positive root of the reversed
polynomial\footnote{If $|q_{0}|$ divides $|q|$ we have $\mathfrak{p}_{q}(t)=
  t^{|q_{0}|}-1$ instead.}  $\mathfrak{p}_{q}(t):= t^{|q|}-\sum_{v\in
  \sqrt[*\,]{P_{q}}}t^{|q|-|v|}$. Summarising these observations we obtain
the following.
\begin{lemma}\label{l.growth}
  Let $q\in X^{*}\setminus \{e\}$. Then there are constants
  $c_{q,1},c_{q,2}>0$ such that the structure function of the language
  $\infix{Q_{q}}$ satisfies \[c_{q,1}\cdot \lambda_q^{n}\leq|
  \infix{Q_{q}}\cap X^{n}|\leq c_{q,2}\cdot \lambda_q^{n}\] where $\lambda_q$
  is the largest (positive) root of the polynomial $\mathfrak{p}_{q}(t)$.
\end{lemma}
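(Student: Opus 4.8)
The plan is to assemble Lemma~\ref{l.growth} from the ingredients already collected in this section, treating it as essentially a bookkeeping result. First I would dispose of the degenerate case: if $|q_{0}|$ divides $|q|$, then by the remark following Eq.~(\ref{eq.radius}) we have $P_{q}^{*}=\{q_{0}\}^{*}$, so $|P_{q}^{*}\cap X^{n}|$ is $1$ when $|q_{0}|\mid n$ and $0$ otherwise; by Lemma~\ref{l.L} applied to the regular language $Q_{q}$ (or directly to $P_{q}^{*}$) the quantity $|\infix{Q_{q}}\cap X^{n}|$ is then bounded between two positive constants, which is exactly the claim with $\lambda_{q}=1$, the largest root of $\mathfrak{p}_{q}(t)=t^{|q_{0}|}-1$ (cf.\ the footnote). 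So from now on assume $|q|/|q_{0}|\notin\bbbn$.

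Next I would justify the two-sided bound by $\lambda_{q}^{n}$. Since $P_{q}$ is finite, $P_{q}^{*}$ and $Q_{q}$ are regular, so Lemma~\ref{l.L} gives a $k\in\bbbn$ with $|\pref{P_{q}^{*}}\cap X^{n}|\le|\infix{P_{q}^{*}}\cap X^{n}|\le k\cdot|\pref{P_{q}^{*}}\cap X^{n}|$, and by Eq.~(\ref{eq.PQ3}) $\infix{Q_{q}}=\infix{P_{q}^{*}}$ (up to finitely many short words, which only affects constants). Combining this with Eq.~(\ref{eq.St85}) applied to $L=P_{q}$ yields constants $c_{q,1},c_{q,2}>0$ and a $\lambda_{q}\ge 1$ with $c_{q,1}\cdot\lambda_{q}^{n}\le|\infix{Q_{q}}\cap X^{n}|\le c_{q,2}\cdot\lambda_{q}^{n}$, where $\lambda_{q}=\limsup_{n\to\infty}\sqrt[n]{|P_{q}^{*}\cap X^{n}|}$ as in Eq.~(\ref{eq.radius}).

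The substantive step — and the one I expect to be the only real obstacle — is identifying this $\lambda_{q}$ as the largest positive root of $\mathfrak{p}_{q}(t)$. Here I would use that $\strr{P_{q}}$ is a code (Corollary~\ref{c.suffix1} shows it is even a suffix code), so the unique factorisation $P_{q}^{*}=(\strr{P_{q}})^{*}$ gives the power-series identity $\mathfrak{s}^{*}_{q}(t)=1/(1-\mathfrak{s}_{q}(t))$ with $\mathfrak{s}_{q}(t)=\sum_{v\in\strr{P_{q}}}t^{|v|}$ a polynomial with nonnegative coefficients and $\mathfrak{s}_{q}(0)=0$. Since $\lambda_{q}^{-1}=\mathsf{rad}\,\mathfrak{s}^{*}_{q}$ is the modulus of the smallest singularity of $\mathfrak{s}^{*}_{q}$, and the only singularities are zeros of $1-\mathfrak{s}_{q}(t)$, the radius equals the smallest positive zero of $1-\mathfrak{s}_{q}(t)$: such a zero exists and is positive because $\mathfrak{s}_{q}$ is continuous, strictly increasing on $[0,\infty)$, vanishes at $0$, and tends to $+\infty$. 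Multiplying $1-\mathfrak{s}_{q}(t)$ by $t^{|q|}$ (noting every $v\in\strr{P_{q}}$ has $|v|\le|q|$ by Proposition~\ref{p.vbar}) turns the smallest positive root of $1-\mathfrak{s}_{q}$ into the largest positive root of the reversed polynomial $\mathfrak{p}_{q}(t)=t^{|q|}-\sum_{v\in\strr{P_{q}}}t^{|q|-|v|}$ via $t\mapsto 1/t$. This gives $\lambda_{q}$ as the largest positive root of $\mathfrak{p}_{q}$, completing the proof; the remaining care is only to check that the finitely many short infixes in $\pref{Q_{q}}$ versus $\pref{P_{q}^{*}}$ (Eq.~(\ref{eq.PQ3})) and the automaton-size constant $k$ are absorbed into $c_{q,1},c_{q,2}$, which is routine.
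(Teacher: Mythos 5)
Your proposal is correct and follows essentially the same route as the paper, which obtains the lemma by "summarising" exactly these observations: the reduction to $|\pref{P_q^*}\cap X^n|$ via Lemma~\ref{l.L} and Eq.~(\ref{eq.PQ3}), the two-sided exponential bound from Eq.~(\ref{eq.St85}), and the identification of $\lambda_q$ with the largest positive root of $\mathfrak{p}_q$ through the code property of $\strr{P_q}$ and the identity $\mathfrak{s}^*_q(t)=1/(1-\mathfrak{s}_q(t))$. Your extra care about the positivity of the smallest singularity and the degenerate case $|q_0|\mid |q|$ only makes explicit what the paper leaves as "readily seen."
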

\textit{Remark. }One could prove Lemma~\ref{l.growth} by showing that, for
each polynomial $\mathfrak{p}_{q}(t)$, its largest (positive) root has
multiplicity $1$.  Referring to Corollary~4 of \cite{St85:probl} (see
Eq.~(\ref{eq.St85})) we avoided these more detailed considerations of a
particular class of polynomials.

In order to facilitate the search for the maximum of the values $\lambda_{q}$
we may restrict our considerations to the case when $|q_{0}| > |q|/2$.
\begin{lemma}
  If $|q_{0}|$ does not divide $|q|$ and the language $P_{q}^{*}$ is maximal
  w.r.t. ``$\subseteq$'' in the class $\bigl\{P_{q'}^{*}:q'\in X^{*}\setminus
  \{e\}\bigr\}$ then $|q_{0}| > |q|/2$.
\end{lemma}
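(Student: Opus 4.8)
I would prove the contrapositive. Since $q_{0}\sqsubseteq q$, the negation $|q_{0}|\leq|q|/2$ of the conclusion is equivalent to $k\geq2$, where $q=q_{0}^{k}\cdot\bar q$ and $k=\bigl\lfloor|q|/|q_{0}|\bigr\rfloor$; and the hypothesis that $|q_{0}|$ does not divide $|q|$ forces $\bar q\neq e$. So, assuming $k\geq2$ and $\bar q\neq e$, the plan is to produce a word $q'$ with $P_{q}^{*}\subsetneq P_{q'}^{*}$, contradicting maximality. I take $q':=q_{0}^{k-1}\cdot\bar q$ (i.e.\ $q$ with one copy of $q_{0}$ removed). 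Then $|q_{0}|\leq|q'|<|q|$, $q=q_{0}\cdot q'$, $q'$ begins with $q_{0}$ (as $k-1\geq1$), and $q'\sqsubseteq q$ (both are prefixes of $q_{0}^{\omega}$ and $|q'|<|q|$).

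The heart of the matter is the inclusion $P_{q}\subseteq P_{q'}^{*}$, which immediately yields $P_{q}^{*}\subseteq P_{q'}^{*}$. I would use the equivalence $w\in P_{q'}^{*}\iff w\cdot q'\in Q_{q'}$, which follows from $Q_{q'}=P_{q'}^{*}\cdot q'\cup\{e\}$ (Proposition~\ref{p.PQ}) by cancellation; so it suffices to show that $v\cdot q'$ is quasiperiodic with quasiperiod $q'$ for every $v\in P_{q}$. Fix such a $v$. By Corollary~\ref{c.vbar}, $q\sqsubseteq v^{\omega}$, hence (using $q'\sqsubseteq q$) also $v\cdot q'\sqsubseteq v^{\omega}$; moreover $|v\cdot q'|=|v|+|q'|\geq|q_{0}|+|q'|=|q|$. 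Thus $q$ and $v\cdot q'$ are prefixes of $v^{\omega}$ with $|q|\leq|v\cdot q'|$, so $q=q_{0}\cdot q'$ is a prefix of $v\cdot q'$. Consequently $q'$ occurs in $v\cdot q'$ at the offsets $0$ (since $q'\sqsubseteq v\cdot q'$), $|q_{0}|$ (since $q_{0}\cdot q'=q\sqsubseteq v\cdot q'$), and $|v|$ (the suffix); the successive gaps $|q_{0}|$ and $|v|-|q_{0}|$ are both at most $|q'|$ because $|q_{0}|\leq|q'|$ and $|v|\leq|q|=|q_{0}|+|q'|$, and the three corresponding length-$|q'|$ windows cover all of $\{1,\dots,|v\cdot q'|\}$. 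Unwinding the prefix-based definition of quasiperiodicity (with $e$, $q_{0}$, $v$ playing the role of the prefixes $u_{j}$) then gives $v\cdot q'\in Q_{q'}$. When $v=q_{0}$ one has $v\cdot q'=q$ and only two occurrences, which is the degenerate sub-case.

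For strictness I would use $q'$ itself: $q'\in P_{q'}\subseteq P_{q'}^{*}$, but $q'\notin P_{q}^{*}$. Indeed $P_{q}^{*}=\strr{P_{q}}^{*}$, and by Eq.~(\ref{eq.starroot}) every element of $\strr{P_{q}}$ is either $q_{0}$ or a prefix of $q$ of length $>|q|-|q_{0}|=|q'|$; so a factorisation $q'=v_{1}\cdots v_{m}$ over $\strr{P_{q}}$ is impossible: if some $v_{i}\neq q_{0}$ then $|q'|\geq|v_{i}|>|q'|$, and if all $v_{i}=q_{0}$ then $|q_{0}|$ divides $|q'|=(k-1)|q_{0}|+|\bar q|$, contradicting $0<|\bar q|<|q_{0}|$. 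Hence $P_{q}^{*}\subsetneq P_{q'}^{*}$, so $P_{q}^{*}$ is not maximal, proving the lemma.

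The hard part will be the middle paragraph: identifying the interior occurrence of $q'$ and then matching the three occurrences to the definition of quasiperiodicity, especially verifying that the case ranges for the index $j$ exhaust $\{0,\dots,|v\cdot q'|-1\}$. The one non-routine insight is that $q$ itself is a prefix of $v\cdot q'$ (both being prefixes of $v^{\omega}$ with $|q|\leq|v\cdot q'|$); this is what forces an occurrence of $q'$ at offset $|q_{0}|$, and everything else is length bookkeeping.
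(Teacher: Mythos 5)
Your proof is correct and follows essentially the same route as the paper: argue the contrapositive by exhibiting a quasiperiod $q'$ with $P_{q}^{*}\subsetneq P_{q'}^{*}$; the paper takes $q':=q_{0}\cdot\bar q$ where you take $q':=q_{0}^{k-1}\cdot\bar q$ (the two coincide for $k=2$), and it simply declares the strict inclusion ``obvious''. Your detailed verification of $P_{q}\subseteq P_{q'}^{*}$ via the $q'$-chain $q'\sqsubset q_{0}\cdot q'=q\sqsubseteq v\cdot q'$, together with the strictness argument showing $q'\notin\strr{P_q}^{*}$, is sound and supplies exactly the details the paper omits.
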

\begin{proof}
  If $|q|/|q_0|\notin \bbbn$ and $|q_{0}|\le |q|/2$ we have $q=q_{0}^{k}\cdot
  \bar q$ for $k\ge2$ and $e\ne \bar q\sqsubset q_{0}$. Then, obviously
  $P_{q}^{*}\subset P_{q'}^{*}$ for $q':= q_{0}\cdot \bar q$.
\end{proof}
{}From $|q_{0}| > |q|/2$ we obtain that $\mathfrak{p}_q(t)$ has the form
$t^{|q|}- \sum_{i\in M}t^{i}$ where $0\in M\subseteq
\{j:j<\frac{|q|}{2}\}$. In \cite{Poll09} the following properties were
derived.
\begin{lemma}\label{l.poly}
  Let $\mathcal{P}:=\bigl\{t^n- \sum_{i\in M} t^i: n\ge1\wedge 0\in M\subseteq
  \{j:j\leq\frac{n-1}{2}\}\bigr\}$. Then
  \begin{enumerate}
  \item for every $n\geq 1$ the polynomial $t^n-\sum_{i=0}^{\lfloor
      \frac{n-1}{2}\rfloor}t^i$ has the largest positive root among all
    polynomials of degree $n$ in $\mathcal{P}$, and
  \item the polynomials $t^{3}-t-1$ and
    $t^{5}-t^{2}-t-1=(t^{2}+1)\cdot(t^{3}-t-1)$ have the largest positive
    roots among all polynomials in $\mathcal{P}$.
  \end{enumerate}
\end{lemma}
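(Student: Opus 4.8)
The plan is to prove Lemma~\ref{l.poly} by analysing the location of the largest positive root of polynomials in $\mathcal{P}$ via the associated rational function, exactly as the reciprocal-polynomial discussion preceding the lemma suggests. For a polynomial $p(t)=t^n-\sum_{i\in M}t^i\in\mathcal{P}$, its largest positive root $\lambda$ is the unique positive number with $\sum_{i\in M}\lambda^{i-n}=1$, i.e. $\sum_{i\in M}\lambda^{-(n-i)}=1$; equivalently $g_M(x):=\sum_{i\in M}x^{\,n-i}=1$ where $x=1/\lambda$, and $g_M$ is strictly increasing on $(0,\infty)$ with $g_M(0)=0$. Hence the largest positive root $\lambda$ is a decreasing function of the set $\{n-i:i\in M\}$ in the pointwise sense, and more precisely: if we enlarge $M$, or replace an exponent $n-i$ by a smaller one, $g_M$ increases pointwise, so the solution $x$ of $g_M(x)=1$ decreases, so $\lambda=1/x$ increases. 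This monotonicity is the engine for both parts.

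For part~1, fix the degree $n$. Among all admissible $M$ we must have $0\in M$ and $M\subseteq\{0,1,\dots,\lfloor\frac{n-1}{2}\rfloor\}$. The choice $M_0=\{0,1,\dots,\lfloor\frac{n-1}{2}\rfloor\}$ is the set-theoretically largest admissible $M$, so by the monotonicity above it maximises $\lambda$ among all degree-$n$ polynomials in $\mathcal{P}$; its polynomial is $t^n-\sum_{i=0}^{\lfloor\frac{n-1}{2}\rfloor}t^i$, as claimed. The only subtlety is to make the comparison statement precise: "largest positive root among all polynomials of degree $n$" — one shows the positive roots are simple and isolated (standard, since $g_M$ is strictly monotone) and that the comparison is strict unless the sets coincide, which follows because $g_{M}(x)<g_{M_0}(x)$ strictly for $x\in(0,1)$ whenever $M\subsetneq M_0$.

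For part~2 we must compare across all degrees $n$. Using part~1 it suffices to consider the "full" polynomials $r_n(t):=t^n-\sum_{i=0}^{\lfloor\frac{n-1}{2}\rfloor}t^i$ and find which $n$ gives the largest root. Write $\mu_n$ for the largest positive root of $r_n$, and $x_n=1/\mu_n\in(0,1)$, so $\sum_{i=0}^{\lfloor\frac{n-1}{2}\rfloor}x_n^{\,n-i}=1$, i.e. $x_n^{\,n}\cdot\frac{1-x_n^{-(\lfloor(n-1)/2\rfloor+1)}}{1-x_n^{-1}}=1$; after clearing denominators this is a clean finite identity $x_n^{\lceil (n+1)/2\rceil}+\dots$ that one can massage into the reciprocal of the stated cubic/quintic for $n=3,5$. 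The cleanest route: show directly that $r_3(t)=t^3-t-1$ and $r_5(t)=t^5-t^2-t-1$, factor $r_5=(t^2+1)(t^3-t-1)$ so $\mu_5=\mu_3=t_P$, and then prove $\mu_n<t_P$ for every $n\notin\{3,5\}$ by the monotone-function comparison: it is enough to check $r_n(t_P)>0$, i.e. $t_P^{\,n}>\sum_{i=0}^{\lfloor(n-1)/2\rfloor}t_P^{\,i}=\frac{t_P^{\lfloor(n-1)/2\rfloor+1}-1}{t_P-1}$, which reduces to a single inequality in $t_P$ after substituting $t_P^3=t_P+1$; for small $n$ ($n=1,2,4$) this is a direct finite check, and for $n\ge 6$ it follows from the exponential gap between $t_P^{\,n}$ and $t_P^{\,\lceil (n+1)/2\rceil}/(t_P-1)$ since $t_P>1$ forces the left side to dominate.

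I expect the main obstacle to be the "all $n\ge 6$" tail estimate in part~2: one needs the inequality $t_P^{\,n}(t_P-1)>t_P^{\lfloor(n-1)/2\rfloor+1}-1$ to hold uniformly, and although it is intuitively clear because $n$ grows twice as fast as $\lfloor(n-1)/2\rfloor+1$, turning this into a clean induction requires care about the floor function and about handling the $-1$ term; the honest way is to prove $t_P^{\,n}(t_P-1)+1>t_P^{\lceil (n+1)/2\rceil}$ by induction on $n$ with base cases $n=6,7$ checked numerically against $t_P\approx1.3247$, using $t_P^2<t_P^{\,n}(t_P-1)/t_P^{\,n-2}$-type monotonicity for the step. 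Everything else — the monotonicity of the largest root under enlarging $M$, part~1, and the explicit factorisation $t^5-t^2-t-1=(t^2+1)(t^3-t-1)$ — is routine. Since the statement attributes this to \cite{Poll09}, I would keep the write-up to the monotonicity lemma plus these finite checks and cite \cite{Poll09} for the detailed polynomial estimates.
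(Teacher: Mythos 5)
Your proposal is correct and follows essentially the same route as the paper: reduce to the ``full'' polynomials $p_n(t)=t^n-\sum_{i=0}^{\lfloor(n-1)/2\rfloor}t^i$ by monotonicity of the largest root in $M$, then show $p_n(t_P)>0$ for $n\notin\{3,5\}$ by direct check for small $n$ and induction on $n$. The only difference is cosmetic: where you sum the geometric series and worry about the tail estimate, the paper uses the recurrence $p_{n+2}(t)-p_n(t)=t^{n+2}-t^n-t^{\lfloor\frac{n+1}{2}\rfloor}$ together with $t_P^3-t_P=1$ to get $p_{n+2}(t_P)-p_n(t_P)=t_P^{n-1}-t_P^{\lfloor\frac{n+1}{2}\rfloor}>0$ for $n\ge 4$, which makes the induction step immediate.
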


Two remarks are in order here. 
\begin{enumerate}
\item It holds
$\mathfrak{p}_{a^{n}ba^{n}}(t)=t^{2n+1}- \sum_{i=0}^nt^i$ and
$\mathfrak{p}_{a^{n}b^2a^{n}}(t)=t^{2n+2}- \sum_{i=0}^nt^i$, so for all
degrees $\ge 1$ there are polynomials of the form $\mathfrak{p}_q(t)$ in
$\mathcal{P}$.
\item The positive root $t_{P}$ of $\mathfrak{p}_{aba}(t)=t^{3}-t-1$ (or of
$\mathfrak{p}_{a^2ba^2}(t)$) is known as the smallest Pisot-Vijayaraghavan
number, that is, a positive root $>1$ of a polynomial with integer
coefficients all of whose conjugates have modulus smaller than $1$.
\end{enumerate}
Before proceeding to the proof of Lemma~\ref{l.poly} we recall that
the polynomials $p(t)\in\mathcal{P}$ have the following easily verified
property.
\begin{equation}\label{eq.poly2}
  \mbox{If }\varepsilon>0\mbox{ and }p(t') \ge 0\mbox{ for some }t'>0\mbox{
    then }p((1+\varepsilon)\cdot t')>0\,. 
\end{equation}
Since $p(0)=-1<0$ for $p(t)\in\mathcal{P}$, Eq.~(\ref{eq.poly2}) shows that
once $p(t')\ge 0,\ t'>0$ the polynomial $p(t)$ has no further root in the
interval $(t',\infty)$.

\bigskip

\begin{proof}
  Using Eq.~(\ref{eq.poly2}) the first assertion is easy to verify.

  To show the second one it suffices to show that $p_n(t_{P})>0$ for every
  polynomial of the form $p_n(t):= t^n - \sum_{i=0}^{\lfloor \frac{n-1}{2}
    \rfloor} t^i$ other than $t^{3}-t-1$ or $t^{5}-t^{2}-t-1$.

  For degrees $n=1,2$ or $n=4$ this is readily seen.

  Now we proceed by induction on $n$. To this end we observe the following
  properties of the family $(p_n(t))_{n\ge 1}$.
  \begin{equation}
    p_{n+2}(t)-p_{n}(t)=t^{n+2}-t^{n}- t^{\lfloor \frac{n+1}{2} \rfloor}\mbox{
      for }n\ge 3 
  \end{equation}
  {}From this one easily obtains that $p_{n+2}(t_P)-p_{n}(t_P)= t_P^{n-1}-
  t_P^{\lfloor \frac{n+1}{2} \rfloor}>0$ for $n\ge 4$, and the assertion
  follows by induction.
\end{proof}

\subsection{The subword complexity of $\omega$-words}
\label{sec.subomega}
Having derived the results on the the subword complexity of quasiperiodic
words we are now in a position to contribute to an answer to Question~2 in
\cite{DBLP:journals/eatcs/Marcus04} by deriving tight upper bounds on the
subword complexity of quasiperiodic infinite words.

To this end we recall that $\infix\xi\subseteq \infix{Q_{q}}$ for every
$\omega$-word $\xi$ with quasiperiod $q$. Thus we obtain the following upper
bound.
\begin{lemma}
  If $\xi\in X^\omega$ is quasiperiodic with quasiperiod $q$ then
  $f(\xi,n)=|\infix\xi\cap X^{n}|\leq c\cdot \lambda_{q}^{n}$ for a suitable
  constant $c>0$ not depending on $\xi$.
\end{lemma}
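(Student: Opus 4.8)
The plan is to combine the inclusion $\infix\xi\subseteq\infix{Q_q}$ (valid for every $\omega$-word $\xi$ with quasiperiod $q$, by Corollary~\ref{c.Qdelta} together with Proposition~\ref{p.PQ}) with the already-established growth estimate for $\infix{Q_q}$ in Lemma~\ref{l.growth}. Concretely, I would first recall from Corollary~\ref{c.Qdelta} that $\pref\xi\subseteq\pref{Q_q}$, hence every infix of $\xi$ is an infix of some word in $Q_q$, so $\infix\xi\cap X^n\subseteq \infix{Q_q}\cap X^n$ for all $n$. Taking cardinalities gives $f(\xi,n)=|\infix\xi\cap X^n|\le|\infix{Q_q}\cap X^n|$.

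Next I would invoke Lemma~\ref{l.growth}, which provides a constant $c_{q,2}>0$ (depending only on $q$, not on $\xi$) with $|\infix{Q_q}\cap X^n|\le c_{q,2}\cdot\lambda_q^{n}$ for all $n$. Chaining the two inequalities yields $f(\xi,n)\le c_{q,2}\cdot\lambda_q^{n}$, and we simply set $c:=c_{q,2}$. Since $c_{q,2}$ was produced from the regular language $Q_q$ (equivalently $P_q^*$) and the polynomial $\mathfrak{p}_q$, it manifestly does not depend on the particular quasiperiodic word $\xi$, which is exactly the uniformity asserted in the statement.

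There is essentially no obstacle here: the lemma is a direct corollary of the monotonicity of the infix-counting function under the inclusion of infix sets, plus Lemma~\ref{l.growth}. The only point requiring a line of care is justifying $\infix\xi\subseteq\infix{Q_q}$ rigorously; for this, note that if $w\sqsubseteq\xi/u$ for some $u\in X^*$, then $u\cdot w\sqsubseteq\xi$, so $u\cdot w\in\pref\xi\subseteq\pref{Q_q}$ by Corollary~\ref{c.Qdelta}, and thus $w\in\pref{Q_q/u}\subseteq\infix{Q_q}$. Everything else is a substitution of constants.

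\begin{proof}
  By Corollary~\ref{c.Qdelta}, $\pref\xi\subseteq\pref{Q_q}$. Hence for every $u\in X^{*}$ and every $w\sqsubseteq\xi/u$ we have $u\cdot w\sqsubseteq\xi$, so $u\cdot w\in\pref{Q_q}$ and therefore $w\in\pref{Q_q/u}\subseteq\infix{Q_q}$. This shows $\infix\xi\subseteq\infix{Q_q}$, whence $|\infix\xi\cap X^{n}|\le|\infix{Q_{q}}\cap X^{n}|$ for all $n\in\bbbn$. Now Lemma~\ref{l.growth} gives a constant $c_{q,2}>0$, depending only on $q$, with $|\infix{Q_{q}}\cap X^{n}|\le c_{q,2}\cdot\lambda_q^{n}$. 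Setting $c:=c_{q,2}$ yields $f(\xi,n)=|\infix\xi\cap X^{n}|\le c\cdot\lambda_q^{n}$, and $c$ does not depend on $\xi$.
\end{proof}
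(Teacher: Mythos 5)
Your proof is correct and follows exactly the route the paper intends: the paper states the inclusion $\infix{\xi}\subseteq \infix{Q_{q}}$ in the sentence preceding the lemma and then appeals to Lemma~\ref{l.growth}, which is precisely your argument with the inclusion spelled out via Corollary~\ref{c.Qdelta}. Nothing further is needed.
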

Following the proof of Proposition 5.5 in \cite{DBLP:journals/iandc/Staiger93}
it can be shown that this upper bound is tight.
\begin{lemma}
  For every quasiperiod $q\in X^*\setminus\{e\}$ there is a $\xi\in
  P_q^\omega$ such that $c_{q,1}\cdot \lambda_{q}^{n}\leq
  f(\xi,n)=|\infix\xi\cap X^{n}|$.
\end{lemma}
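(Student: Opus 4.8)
The plan is to build, for a given quasiperiod $q$, an $\omega$-word $\xi\in P_q^\omega$ that is ``as rich as possible'' in the sense that every sufficiently long word of $\infix{P_q^*}$ actually occurs in $\xi$, up to a bounded truncation at the ends. Concretely, I would enumerate $\infix{P_q^*}=\{z_1,z_2,\dots\}$ (this set is infinite precisely in the nontrivial case $|q_0|\nmid|q|$, which is the only case where the claim has content) and, using Corollary~\ref{c.Qq}, for each $z_j$ fix words $u_j\sqsubseteq z_j\sqsubseteq u_j'$ with $u_j,u_j'$ suitably related to elements of $Q_q=P_q^*\cdot q\cup\{e\}$. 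The $\omega$-word is then obtained by concatenating, in order, members $p_1p_2\cdots$ of $P_q$ so that every $z_j$ appears as an infix; the key point is that because $P_q$ is finite with all words of length $\le|q|$, splicing one partial run of $P_q$-factors to the next costs only a bounded (by $|q|$) amount of ``overhead'' per block, so the density of distinct infixes is not diluted.

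The core estimate is the counting argument. I would show that for each $n$ the number of infixes of $\xi$ of length $n$ is at least $c'\cdot|\infix{P_q^*}\cap X^n|$ for some constant $c'>0$: every word $z\in\infix{P_q^*}\cap X^n$ either occurs directly in $\xi$ (by construction, for $n$ large enough that $z$ appears as some $z_j$ in a block long enough to contain it) or, if $z$ straddles a splice, then a prefix or suffix of $z$ of length $\ge n-|q|$ does occur; grouping the straddling words by their $(n-|q|)$-length truncation shows only boundedly many can share a truncation, so the loss is at most a multiplicative constant. Combining this with Lemma~\ref{l.growth}, which gives $|\infix{Q_q}\cap X^n|\ge c_{q,1}\cdot\lambda_q^n$, and noting $\infix{P_q^*}$ and $\infix{Q_q}$ differ only by the bounded prefix/suffix adjustments of Eq.~(\ref{eq.PQ3}) so have the same exponential rate, yields $f(\xi,n)=|\infix\xi\cap X^n|\ge c_{q,1}\cdot\lambda_q^n$ after absorbing constants; a small reindexing or choice of $c_{q,1}$ handles the finitely many small $n$.

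The main obstacle is making the ``every long infix occurs'' construction genuinely work while controlling the enumeration: one must interleave the targets $z_j$ so that \emph{all} of them eventually appear, yet each appears inside a stretch of pure $P_q$-concatenation long enough that it is not broken by the next splice, and one must verify that the resulting infinite string really lies in $P_q^\omega$ (i.e.\ that the connecting material between consecutive target-blocks is itself a concatenation of $P_q$-words). Here I would lean on Lemma~\ref{l.Qq} and Proposition~\ref{p.PQ}: since $Q_q=P_q^*\cdot q\cup\{e\}$ and $\pref{Q_q}=P_q^*\cdot\pref q$, any prefix of any quasiperiodic word can be extended within $P_q^*$, so the blocks can always be padded and joined without leaving $P_q^\omega$. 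This is exactly the device behind Proposition~5.5 of \cite{DBLP:journals/iandc/Staiger93}, and the quasiperiodic setting only adds the bookkeeping that the ``glue'' respects the generator $P_q$; the exponential-rate bookkeeping is then routine given Lemma~\ref{l.L} and Eq.~(\ref{eq.St85}).
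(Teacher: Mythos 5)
Your construction works, and it is in the same family as the paper's (and as Proposition~5.5 of \cite{DBLP:journals/iandc/Staiger93}): build $\xi$ by concatenating an enumeration of a language so that every target word eventually occurs as an infix. But the paper makes one choice that dissolves all of the bookkeeping you spend most of your effort on: it enumerates $P_q^*=\{v_0,v_1,v_2,\dots\}$ itself, rather than $\infix{P_q^*}$, and sets $\xi:=\prod_{i\in\bbbn}v_i$. Because each $v_i$ is already a concatenation of $P_q$-words, $\xi\in P_q^\omega$ is immediate (no ``glue'' or padding is needed), and because every $w\in P_q^*$ occurs as an entire block $v_i$, every infix of $P_q^*$ is literally an infix of $\xi$; hence $\infix\xi=\infix{P_q^*}=\infix{Q_q}$ \emph{exactly}, and the bound is just Lemma~\ref{l.growth} with no loss. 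Your version --- enumerating $\infix{P_q^*}$, extending each $z_j$ via Corollary~\ref{c.Qq}, and then arguing that splices and truncations cost only a multiplicative constant --- can be pushed through, but note two points. First, the splicing/straddling analysis is unnecessary even in your setup: since each $z_j\in\infix{P_q^*}$ extends to some $x_jz_jy_j\in P_q^*$, you can concatenate these extensions directly and every $z_j$ occurs inside its own block, so nothing straddles a seam. Second, and more substantively, the lemma asserts the bound with the \emph{specific} constant $c_{q,1}$ from Lemma~\ref{l.growth}, which the paper emphasizes immediately after the statement; an argument that only yields $c'\cdot c_{q,1}\cdot\lambda_q^n$ for some $c'<1$ and then ``absorbs constants'' proves a slightly weaker statement. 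The equality $\infix\xi=\infix{Q_q}$ delivered by the cleaner construction is what makes the stated constant come out exactly.
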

Here $c_{q,1}$ is the constant mentioned in Lemma~\ref{l.growth}.
\begin{proof}
  Let $P_q^*=\{v_0,v_{1},v_2\ldots \}$ and define $\xi:=
  \prod_{i\in\bbbn}v_{i}$. Then obviously $\infix\xi=\infix{P_q^*}=\infix{Q_q}$.
\end{proof}

An over-all upper bound on the subword complexity of quasiperiodic
$\omega$-words now follows from Lemma~\ref{l.poly}.
\begin{theorem}\label{th.upper}
  There is a constant $c>0$ such that for every quasiperiodic $\omega$-word
  $\xi\in X^\omega$ there is an $n_\xi\in \bbbn$ such that
  $f(\xi,n)=|\infix\xi\cap X^{n}|\leq c\cdot t_{P}^n$ for all $n\ge n_\xi$.
\end{theorem}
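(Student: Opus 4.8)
The plan is to combine the per-quasiperiod bound from Lemma~\ref{l.growth} (and its $\omega$-word consequence, the preceding lemma) with the uniform polynomial estimate of Lemma~\ref{l.poly}. Concretely: every quasiperiodic $\xi\in X^\omega$ has \emph{some} quasiperiod $q$, and we already know $f(\xi,n)=|\infix\xi\cap X^n|\leq c_{q,2}\cdot\lambda_q^n$, where $\lambda_q$ is the largest positive root of $\mathfrak{p}_q(t)$. The issue is that both the constant $c_{q,2}$ and the base $\lambda_q$ depend on $q$, whereas the theorem demands a single constant $c$ working for all $\xi$ simultaneously (at the price of allowing a threshold $n_\xi$). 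So the argument splits into controlling $\lambda_q$ uniformly and then absorbing the $q$-dependent constant into the threshold.

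First I would handle the base. By the reduction lemma we may assume $|q_0|>|q|/2$ whenever $P_q^*$ is maximal (and for the purpose of bounding $\lambda_q$ it suffices to pass to such a maximal $q'$, since $P_q^*\subseteq P_{q'}^*$ gives $\lambda_q\le\lambda_{q'}$); then $\mathfrak{p}_q(t)$ lies in the class $\mathcal P$ of Lemma~\ref{l.poly}. That lemma tells us $t^3-t-1$ and its multiple $t^5-t^2-t-1$ have the largest positive roots in all of $\mathcal P$, so $\lambda_q\le t_P$ for every $q$ with $|q|/|q_0|\notin\bbbn$; and if $|q_0|$ divides $|q|$ then $\lambda_q=1<t_P$. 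Hence $\lambda_q\le t_P$ unconditionally. This already gives $f(\xi,n)\le c_{q,2}\cdot t_P^n$, but with a $q$-dependent constant.

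To reach a uniform constant I would exploit the freedom in $n_\xi$. Fix, say, $c:=1$. Given a quasiperiodic $\xi$ with quasiperiod $q$, we have $f(\xi,n)\le c_{q,2}\cdot\lambda_q^n\le c_{q,2}\cdot t_P^n$ for all $n$; since $t_P>1$, as soon as $n$ is large enough that $c_{q,2}\le t_P^{\,?}$… more cleanly: choose $n_\xi$ so large that $c_{q,2}\le (t_P/\lambda_q)^{n}$ for all $n\ge n_\xi$ when $\lambda_q<t_P$, and—in the degenerate case $\lambda_q=t_P$, which by Lemma~\ref{l.poly} forces $\mathfrak{p}_q$ to be $t^3-t-1$ or $t^5-t^2-t-1$, a \emph{finite} list of polynomials with known constant $c_{q,2}$—simply take $c:=\max$ of those finitely many $c_{q,2}$'s and $n_\xi:=0$. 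Merging the two cases, set $c$ to be that finite maximum; then for $\lambda_q<t_P$ pick $n_\xi$ with $c_{q,2}\cdot\lambda_q^{n}\le c\cdot t_P^{n}$ for $n\ge n_\xi$, which exists because $\lambda_q/t_P<1$ makes the ratio $c_{q,2}(\lambda_q/t_P)^n\to 0\le c$.

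The main obstacle is precisely this uniformity bookkeeping: the naive reading of Lemma~\ref{l.growth} gives a bound depending on $q$ through \emph{both} $c_{q,2}$ and $\lambda_q$, and one must notice (i) that $\lambda_q$ is uniformly $\le t_P$ via Lemma~\ref{l.poly}, and (ii) that the only $q$ for which $\lambda_q$ actually equals $t_P$ form a finite family, so their constants can be taken into a global $c$, while for every other $q$ the strict gap $\lambda_q<t_P$ lets the exponential $t_P^n$ swallow the constant $c_{q,2}$ past a threshold $n_\xi$. Everything else is a direct appeal to results already established: $\infix\xi\subseteq\infix{Q_q}$, Lemma~\ref{l.growth}, the reduction to $|q_0|>|q|/2$, and Lemma~\ref{l.poly}.
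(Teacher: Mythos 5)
Your proposal is correct and follows the paper's (essentially one-line) argument: combine the per-quasiperiod bound $f(\xi,n)\le c_{q,2}\lambda_q^n$ with the reduction to $|q_0|>|q|/2$ and Lemma~\ref{l.poly} to get $\lambda_q\le t_P$. The uniformity bookkeeping you add --- taking $c$ as the maximum of the finitely many constants $c_{q,2}$ with $\lambda_q=t_P$ and letting $n_\xi$ absorb $c_{q,2}$ whenever $\lambda_q<t_P$ --- is exactly the detail the paper glosses over and is handled correctly.
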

We conclude this section with the following remark.

\noindent\textit{Remark. }Theorem~\ref{th.upper}
is independent of the size of the alphabet $X$. And indeed, quasiperiodic
$\omega$-words of maximal subword complexity have quasiperiods of the form
$aba$ or $aabaa$, $a,b\in X,\ a\ne b$ (see the remark after
Lemma~\ref{l.poly}), thus consist of only two different letters.

\section{Concluding Remark}
In the present paper we investigated the maximally achievable subword
complexity for quasiperiodic infinite words. It should be mentioned that using
results of \cite{DBLP:journals/iandc/Staiger93} the bounds obtained here can
be extended to the Kolmogorov complexity of infinite words.

In \cite[Section~5]{DBLP:journals/iandc/Staiger93} the asymptotic subword
complexity of an $\omega$-word $\xi\in X^\omega$ was introduced as
$\tau(\xi):= \lim_{n\to\infty}\frac{\log_{|X|}|\infix\xi\cap X^{n}|}{n}$
and it was shown that $\tau$ is an upper bound to the asymptotic upper and
lower Kolmogorov complexities of infinite words:
\[\underline\kappa(\xi)\le \kappa(\xi)\le\tau(\xi)\,.\]
Moreover, from the results of \cite[Section~4]{DBLP:journals/iandc/Staiger93}
it follows that for every quasiperiodic word $q$ there is a $\xi\in
P_q^\omega$ such that $\underline\kappa(\xi)=\tau(\xi)=\log_{|X|}\lambda_q$,
that is, a quasiperiodic $\omega$-word having quasiperiod $q$ of maximally
possible asymptotic (lower) Kolmogorov complexity.  
\bibliographystyle{eptcs} 

\begin{thebibliography}{Mar04}
\bibitem[AS03]{book:AlloucheSha}
Jean-Paul Allouche and Jeffrey Shallit.
\newblock {\em Automatic sequences}.
\newblock Cambridge University Press, Cambridge, 2003.
\newblock Theory, applications, generalizations.

\bibitem[BK03]{DBLP:journals/eatcs/BerstelK03}
Jean Berstel and Juhani Karhum{\"a}ki.
\newblock Combinatorics on words: a tutorial.
\newblock {\em Bulletin of the EATCS}, 79:178--228, 2003.

\bibitem[BP85]{BerstelPerrin1985}
Jean Berstel and Dominique Perrin.
\newblock {\em Theory of codes}, volume 117 of {\em Pure and Applied
  Mathematics}.
\newblock Academic Press Inc., Orlando, FL, 1985.

\bibitem[LR04]{DBLP:journals/eatcs/LeveR04}
Florence Lev{\'e} and Gw{\'e}na{\"e}l Richomme.
\newblock Quasiperiodic infinite words: Some answers (column: Formal language
  theory).
\newblock {\em Bulletin of the EATCS}, 84:128--138, 2004.

\bibitem[LR07]{DBLP:journals/tcs/LeveR07}
Florence Lev{\'e} and Gw{\'e}na{\"e}l Richomme.
\newblock Quasiperiodic {S}turmian words and morphisms.
\newblock {\em Theor. Comput. Sci.}, 372(1):15--25, 2007.

\bibitem[Mar04]{DBLP:journals/eatcs/Marcus04}
Solomon Marcus.
\newblock Quasiperiodic infinite words (columns: Formal language theory).
\newblock {\em Bulletin of the EATCS}, 82:170--174, 2004.

\bibitem[Pol09]{Poll09}
Ronny Polley.
\newblock Subword complexity of infinite words.
\newblock Diploma thesis, Martin-Luther-Universit\"at Halle-Wittenberg,
  Institut f\"ur Informatik, Halle, 2009.

\bibitem[Shy01]{Shyr:book}
Huei-Jan Shyr.
\newblock {\em Free Monoids and Languages}.
\newblock Hon Min Book Company, Taichung, third edition, 2001.

\bibitem[SS78]{book:SalomaaSoi}
Arto Salomaa and Matti Soittola.
\newblock {\em Automata-theoretic aspects of formal power series}.
\newblock Springer-Verlag, New York, 1978.
\newblock Texts and Monographs in Computer Science.

\bibitem[Sta85]{St85:probl}
Ludwig Staiger.
\newblock The entropy of finite-state {$\omega$}-languages.
\newblock {\em Problems Control Inform. Theory/Problemy Upravlen. Teor.
  Inform.}, 14(5):383--392, 1985.

\bibitem[Sta93]{DBLP:journals/iandc/Staiger93}
Ludwig Staiger.
\newblock {K}olgomorov complexity and {H}ausdorff dimension.
\newblock {\em Inf. Comput.}, 103(2):159--194, 1993.

\bibitem[Sta97]{DBLP:conf/csl/Staiger97}
Ludwig Staiger.
\newblock Rich $\omega$-words and monadic second-order arithmetic.
\newblock In Mogens Nielsen and Wolfgang Thomas, editors, {\em CSL}, volume
  1414 of {\em Lecture Notes in Computer Science}, pages 478--490. Springer,
  1997.
\end{thebibliography}

\end{document}